\pgfplotsset{compat=1.10}
\def\BibTeX{{\rm B\kern-.05em{\sc i\kern-.025em b}\kern-.08em
    T\kern-.1667em\lower.7ex\hbox{E}\kern-.125emX}}
\newtheorem{definition}{Definition}
\newtheorem{theorem}{Theorem}
\newtheorem{proposition}{Proposition}
\newtheorem{assumption}{Assumption}
\newtheorem{lemma}{Lemma}
\newtheorem{corollary}{Corollary}
\begin{document}
\title{Time-Varying Coverage Control: \\A Distributed Tracker-Planner MPC Framework}
\author{ Patrick Benito Eberhard, Johannes Köhler, Oliver Hüsser, Melanie N. Zeilinger, Andrea Carron
\thanks{%
%For a visual demonstration of our contribution, we refer the reader to the video available under the following link: https://gitlab.ethz.ch/ics/time-varying-coverage-control.\\
Johannes K\"ohler was supported as a part of NCCR Automation, a National Centre of Competence in Research, funded by the Swiss National Science Foundation (grant number 51NF40\_225155).\\
All authors are members of the Institute for Dynamic Systems and Control (IDSC), ETH Z{\"u}rich. Corresponding author:
{\tt\footnotesize carrona@ethz.ch}}}

\maketitle
    
\begin{abstract}
Time-varying coverage control addresses the challenge of coordinating multiple agents covering an environment where regions of interest change over time. This problem has broad applications, including the deployment of autonomous taxis and coordination in search and rescue operations. The achievement of effective coverage is complicated by the presence of time-varying density functions, nonlinear agent dynamics, and stringent system and safety constraints. In this paper, we present a distributed multi-agent control framework for time-varying coverage under nonlinear constrained dynamics. Our approach integrates a reference trajectory planner and a tracking model predictive control (MPC) scheme, which operate at different frequencies within a multi-rate framework. For periodic density functions, we demonstrate closed-loop convergence to an optimal configuration of trajectories and provide formal guarantees regarding constraint satisfaction, collision avoidance, and recursive feasibility. Additionally, we propose an efficient algorithm capable of handling nonperiodic density functions, making the approach suitable for practical applications. Finally, we validate our method through hardware experiments using a fleet of four miniature race cars.
\end{abstract}
\noindent {\small {\bf Video}: \href{https://youtu.be/9kNvgjx3XbY}{https://youtu.be/9kNvgjx3XbY}}

\noindent {\small {\bf Code}: \href{https://gitlab.ethz.ch/ics/time-varying-coverage-control}
{https://gitlab.ethz.ch/ics/time-varying-coverage-control}}

\vspace{0.2cm}

\begin{IEEEkeywords}
Coverage Control, NL Predictive Control, Cooperative Control, Agents and Autonomous Systems, Robotics
\end{IEEEkeywords}

\section{Introduction}
\label{sec:introduction} 
\textit{Coverage control } is an essential field in networked robotic systems and sensor networks, with applications ranging from environmental monitoring \cite{envmonitor}, search and rescue operations \cite{MACWAN}, robotic cleaning \cite{cleaning}, natural disaster mitigation \cite{oilspill}, and coordination of self-driving taxis \cite{carron2021}. The core objective of coverage control focuses on optimally allocating a network of agents to ensure efficient coverage of an environment, guided by a density function that reflects the spatial distribution of demands or priorities.

The density function is generally time-varying, reflecting the dynamic nature of the tasks and environments in which multi-agent systems operate, such as self-driving taxis, which must respond to a rapidly fluctuating demand. To deploy coverage control algorithms in the real world, the overall architecture should ensure collision avoidance of physical robots, consider time-varying environments, accommodate nonlinear dynamics, and facilitate scalable design and deployment. In this work, we develop a framework that effectively addresses these challenges.

\begin{figure}[t]
\centerline{\includegraphics[width=1.0\columnwidth]{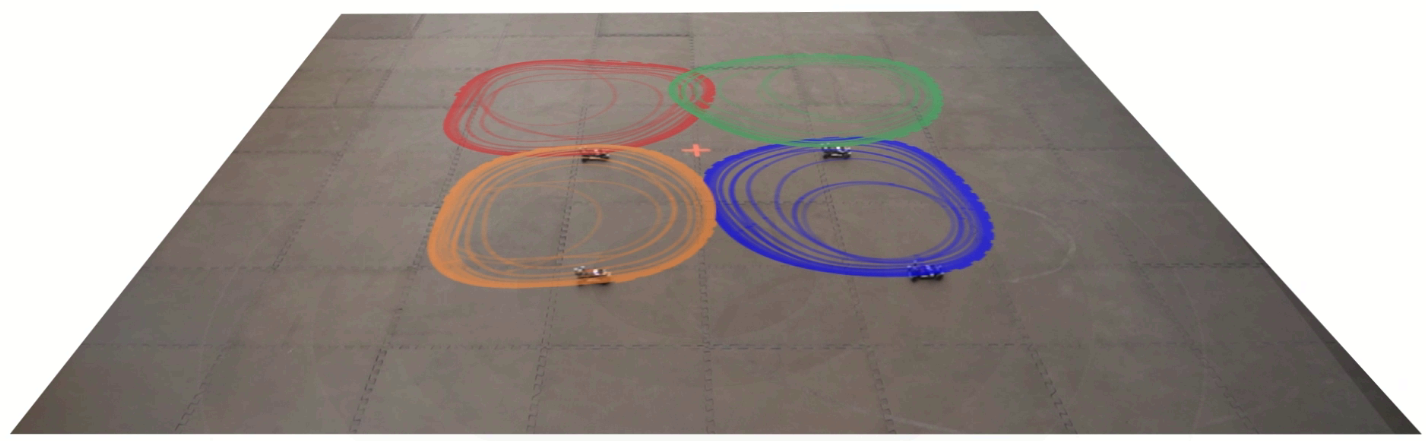}}
\caption{Closed-loop trajectories of four miniature cars covering a multivariate normal density function as it periodically translates in circles around the environment.}
\label{fig:experimental_setup}
\end{figure}

\subsubsection*{Related Work}
A variety of algorithms have been proposed to address coverage control problems. Most of them only consider a static density function and assume that each agent's motion follows single integrator dynamics. This assumption simplifies the problem, allowing agents to execute iterative solution approaches directly, as demonstrated in \cite{bullo,Corts2002CoverageCF}. In the latter work, the space is iteratively divided into optimal Voronoi partitions \cite{voronoi_tesselation_v2}, simplifying the multi-agent problem by solving a localized one within each agent's partition.

Algorithms for multi-agent coordination involving nonlinear constrained dynamics based on model predictive control (MPC) \cite{grune2017nonlinear, rawlings2017model} have been proposed in \cite{CARRONmpcc, KOHLER2018368, FARINA2015248}. The work in \cite{koehler_distributed_periodic_cooperation,kohler2025distributed} achieves cooperative control with a distributed MPC scheme that solves periodic artificial output trajectories. Recent developments, such as \cite{rickenbach2023}, incorporate artificial setpoints calculated within an MPC scheme for coverage control. However, this method does not accommodate time-varying densities, and the joint optimization of coverage planning and reference tracking can become computationally expensive. Such computational challenges can be addressed using hierarchical frameworks, where a reference planner and a tracker are executed at different frequencies~\cite{bender, koehler2020}. In a similar spirit, we develop a computationally efficient hierarchical framework for time-varying coverage control.

Significant progress in time-varying coverage control has been achieved for agents with integrator dynamics \cite{LEE2, LEE, santos}. These approaches have proven particularly effective in enabling human interaction with multi-robot swarms \cite{diaz}. Additionally, works such as \cite{Kennedy, Abdulghafoor2023MotionCO} incorporate collision avoidance mechanisms. However, the reliance on simplified integrator dynamics limits their applicability to nonlinear constrained systems, which cannot effectively consider arbitrary density functions. 

To conclude, approaches to time-varying coverage control with nonlinear constrained dynamics, which also guarantee persistent collision avoidance, remain largely unexplored in the literature. This work aims to address this gap through a unified and computationally tractable formulation.

\subsubsection*{Contributions}
This paper introduces a novel two-layer control architecture for time-varying coverage control involving nonlinear constrained dynamics. The algorithm comprises an optimal reference trajectory planner and a tracking MPC scheme that follows the planned references. The planning algorithm consists of an iterative method that computes sequences of spatial partitions and reference trajectories that adhere to the agent's dynamics, state, and input constraints. Additionally, we employ a tracking MPC scheme that does not require any terminal ingredients, circumventing their difficult design \cite{koehlerECO,grune2017nonlinear}. The two-layer architecture enables the tracker to operate at high frequencies while efficiently managing computational resources and decoupling the complex task of trajectory planning from real-time control. The algorithm operates independently on each agent in a distributed manner, where agents share their intended paths to determine new partition sequences through consensus. An overview of the architecture is shown in Figure~\ref{fig:algorithm_decoupled_coverage_control}.

We show closed-loop convergence properties to an optimal configuration of periodic trajectories and collision avoidance for periodic, time-varying density functions. Moreover, we devise an algorithm that can be applied to nonperiodic, time-varying density functions, offering collision avoidance guarantees.

To conclude, our approach is executed in a distributed manner, showcasing its robustness and scalability. We demonstrate the proposed approach on hardware using miniature racing cars in combination with CRS, an open-source software framework for control and robotics \cite{carron2023chronos}.

\subsubsection*{Outline}
Section~\ref{sec:problemFormulation} introduces the dynamics and constraints considered for each agent and the problem of optimal periodic coverage, and we propose an iterative approach in Section~\ref{section:optimal_coverage_planner}. Section~\ref{section:coverage_control_mpc} presents a two-layer algorithm based on an optimal periodic trajectory planner and a tracking MPC for addressing periodic coverage in practice. We further extend our framework to nonperiodic densities in Section~\ref{section:nonperiodic_coverage_control_mpc}. Section~\ref{section:discussion} provides a discussion on the proposed framework. We experimentally validate both algorithms on hardware in~\ref{section:results}, and an overall conclusion is provided in Section~\ref{section:conclusion}. Finally, some extensive proofs are deferred to the appendix.

\subsubsection*{Notation}
The Euclidean norm of a vector~$x\in\mathbb{R}^n$ is denoted by~$\|x\|$. The set of all non-negative real numbers is given by~$\mathbb{R}_+$, and the set of natural numbers by~$\mathbb{N}$. We define the ball~$\mathbb{B}_\epsilon^b = \{x \in \mathbb{R}^b \mid \|x\| \leq \epsilon\}$ and indicate with~$\ominus$ the Pontryagin set difference. By~$x_{k|t}$ we denote the prediction of a variable $x$ at time $t$ at $k$ steps in the future. Similarly,~$x_{a:b|t}$ denotes the prediction for~$k=a,\dots,b$, and~$x_{\cdot|t}$ denotes the complete prediction with length provided by the context. Further, $r_{\cdot+N|t}$ denotes the sequence~$r_{\cdot|t}$ shifted by~$N$ units. Finally, to denote an optimal solution to a minimization problem at the planner level for a variable $x$, we employ $\opt{x}$. Similarly, we use $\optt{x},\optt{u}$ to represent the optimal state and input computed by the tracker, respectively.

%Whenever we use the notation $r_t$ without indices, we mean the collection of the corresponding variable of all agents in the form of a vector, e.g., $p_t \colon [r_{1,t}, \dots, r_{M,t}]^\top$. Further, we use the notation $\rti$ to denote trajectories (and $\rt$ their collections) that start at time $t$ and have an arbitrary length given by the context. We employ $r_{\cdot+n|t}$ with $n \in \mathbb{N}$ to denote the trajectory $\rt$ shifted by $n$ units. We use $r_{a:b | t}$ for the subset of trajectories in an interval $[a,b]$ and starting at time $t$. To represent an optimal solution of a variable $r$, we employ $\opt{r}$.

\begin{comment}
\begin{figure}[!t]
\centerline{\includegraphics[width=1.0\columnwidth]{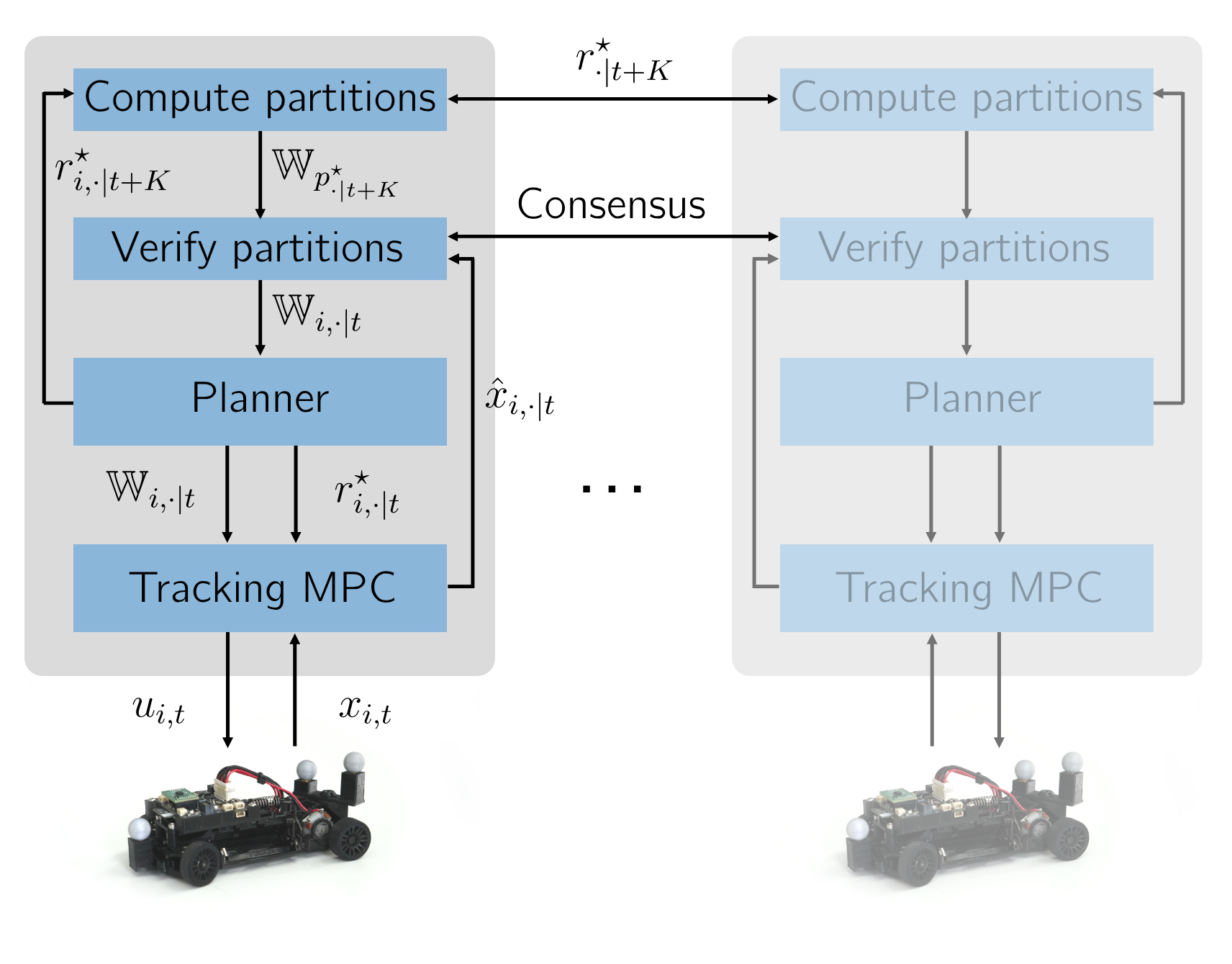}}
\caption{Graphical representation of the proposed architecture involving a tracking MPC and a trajectory planner. Agents exchange their reference trajectories, and recompute and validate new partition sequences in a distributed manner.}
\label{fig:algorithm_decoupled_coverage_control}
\end{figure}

\end{comment}

\section{Problem Formulation}
\label{sec:problemFormulation} 
In this section, we introduce the problem of coverage control and define the dynamics and constraints of the agents.

\subsection{Dynamics and Constraints}
Our task is to coordinate $M$ agents that move in a convex area $\mathbb{A} \subseteq \mathbb{R}^d$. Each agent $i \in \mathcal{M} := \{1,\ldots,M\}$ moves according to its own known dynamics.
\begin{subequations} \label{eq:agent_dynamics}
\begin{align}
    & x_{i, t+1} = f_i(x_{i,t}, u_{i,t}), \\
    & p_{i, t} = C_i x_{i,t},
\end{align}
\end{subequations}
where $x_{i, t} \in \mathbb{R}^{n_i}$ and $u_{i, t} \in \mathbb{R}^{m_i}$ are the state and input at time~$t$ for agent $i$ respectively. Furthermore, we impose constraints on the state and input, i.e., $x_{i, t} \in \X_i $ and $u_{i, t} \in \U_i$ for $t \in \mathbb{N}$. The matrix $C_i \in \mathbb{R}^{d \times n_i}$ allows us to extract the position~$p_{i,t} \in \mathbb{A}$ from the state vector. The latter is generally represented in Cartesian coordinates, i.e.,~$d=2$ or~$d=3$. It should be noted that our framework accommodates heterogeneous dynamics and constraints for each agent. In addition, we consider the following mild conditions.

\begin{assumption}
\label{assumption1}
The agent’s dynamics $f_i$ are known and Lipschitz continuous with Lipschitz constant $\mathcal{L}_{f_i}$.
Furthermore, the agent states can be perfectly measured, and the sets $\mathbb{X}_i$,$\mathbb{U}_i$, $\forall i\in\M$, and $\mathbb{A}$ are compact.
\end{assumption}

We additionally impose a minimum distance between all agents to avoid collisions. For all time instances $t \in \mathbb{N}$, it must be ensured that,
\begin{equation}
\label{collision_avoidance}
    \|p_{i,t} - p_{j,t}\| \geq 2R_{\max}, \: \forall i, j \in \M, i \neq j,
\end{equation}
where $R_{\max}$ represents the radius of the ball that covers the size of every agent.

\subsection{Coverage Control}
The task of coverage control is to optimally coordinate agents in the environment $\mathbb{A}$ with respect to a time-varying density function $\phi : \mathbb{A} \times \mathbb{N} \rightarrow \mathbb{R}_{+}$ that determines the importance of each point in the environment at a specific time. Additionally, the environment $\mathbb{A}$ is divided into a set of partitions $\W_{t} = \{ \W_{1,t}, \dots, \W_{M,t} \}$ that attribute an area of responsibility to each agent at a specific time $t$.   

We define the following coverage cost, which is also known as the \textit{locational cost}
\begin{equation} \label{eq:coverage_cost}
    H(p_t, \W_t, t) = \sum_{i = 1}^M \int_{\W_{i,t}} \mynorm{q - p_{i,t}}^2 \phi(q, t) dq,
\end{equation}
and is computed based on a fixed partition $\W_t$ and a set of positions $p_t= \{p_{1,t}, \ldots, p_{M,t} \}$  at time $t$. Every agent has a \textit{sensing capability} that is represented with the squared Euclidean norm. Intuitively, it suggests that the efficacy of the agent's actions diminishes with increasing distance from the point it is sensing.
Optimal coverage control involves finding a sequence of trajectories and partitions that minimize the cumulative locational cost~\eqref{eq:coverage_cost}. However, for general time-varying costs, computing optimal trajectories is only possible under additional conditions~\cite{economic_mpc_time_varying}. 
Hence, we also specifically study the problem where the density $\phi$ is periodic with some known period length $T$. 
In this case, the optimal solution to the periodic coverage problem can be formulated as 

\begin{subequations} \label{eq:coverage_problem}
\begin{align}
    \min_{x, u, \W} & \quad \sum_{t = 0}^{T-1} H(p_t, \W_t, t)\\
\label{eq:coverage_problem_periodicity_constraint}
\textrm{s.t.} & \quad  x_{i, 0} = x_{i, T} \\
& \quad  x_{i, t + 1} = f_i(x_{i,t}, u_{i, t}), \\
& \quad  x_{i,t} \in \X_i^{\intr}, u_{i,t}\in \U_i^{\intr}, \\
\label{eq:coverage_problem_position_constraint}
& \quad  p_{i,t} = C_i x_{i,t} \in \Wbar_{i, t}^{\intr}, \\
& \quad  t=0, \ldots, T-1,
\quad  i \in \M,\nonumber
\end{align}
\end{subequations}
where $\Wbar_i := \W_i \ominus \mathbb{B}_{R_{\max}}^2$ in~\eqref{eq:coverage_problem_position_constraint} ensures collision-free operation by tightening the partition of each agent with its radius $R_{\max}$. Due to the periodic nature of the problem, we impose a periodic constraint on the agent's states in \eqref{eq:coverage_problem_periodicity_constraint}. Considering some arbitrary small but fixed constant~$\epsilon > 0$, for each set~$\W \subseteq \mathbb{R}^b$ we denote~$\W^{\textrm{int}} = \W \ominus \mathbb{B}_\epsilon^b$. With this notation, $\X_i^{\intr}, \U_i^{\intr}, \Wbar_i^{\intr}$ represent sets that lie in the interior of their original sets, which will be used further to ensure that the physical agents can reach the planned trajectories. 

Overall, problem \eqref{eq:coverage_problem} represents a joint optimization of the positions $p_t$ and partitions $\W_t$ of all agents that minimize the coverage cost over an entire period. However, the joint nature of the optimization renders the problem computationally intractable.

\section{Optimal Periodic Coverage}
\label{section:optimal_coverage_planner}

Computing optimal coverage for \eqref{eq:coverage_problem} requires solving both agent trajectories and spatial partitions. To address this complexity, we propose an iterative approach that alternates between the optimization of agent trajectories and partitions, sharing similarities with Lloyd's algorithm \cite{lloyd}.

Consider the set of positions $p_t$ at a specific time $t$. The optimal partition of the area $\mathbb{A}$ based on the positions $p_t$ is given as the \textit{Voronoi tesselation}~\cite{voronoi_tesselation_v2}
\begin{equation} \label{eq:voronoi_partition}
    \W_{i, t}= \left\{q \in \mathbb{A} \mid\left\|q-p_{i,t}\right\| \leq\left\|q-p_{j,t}\right\|, \forall j \neq i\right\},
\end{equation}
where each point $q$ is assigned to the closest agent. 

For a time-varying density function, we require a sequence of partitions for the period length $T$, consisting of individual Voronoi partitions at each time step, i.e.
\begin{equation}
\label{eq:def_partition_seuqnce}
    \W_{\cdot | t} = \{\W_{0|t}, \W_{1|t}, \ldots, \W_{T-1|t} \},
\end{equation}
where $\cdot | t$ denotes a sequence starting from time $t$ covering the period $T$. Furthermore, we use $\W_{p_{\cdot | t}}$ to denote a partition sequence constructed with the position sequence $p_{\cdot | t}$, which is defined analogously to \eqref{eq:def_partition_seuqnce}.

We now address the problem of optimal periodic coverage planning by providing the following definition.
\begin{definition}
\label{def:optimal_periodic_configuration}
An optimal periodic configuration
is the collection of periodic trajectories~$r_{\cdot|t} = (x_{\cdot|t},u_{\cdot|t})$ and corresponding position sequence~$p_{\cdot|t}$ that are generators of the partition sequence~$\W_{p_{\cdot | t}}$ and are minimizers of~\eqref{eq:coverage_problem} for the same partitions~$\W_{p_{\cdot | t}}$.
\end{definition}

Note that the coverage cost for a specific agent $i$ and partition sequence $\W_{\cdot | t}$, starting at time $t$ for the period $T$ satisfies
\begin{equation} \label{eq:modified_coverage_cost}
\begin{aligned}
H_i^T&\left(p_{i,\cdot \mid t},\W_{i,\cdot \mid t}, t\right) \\
:=& \sum_{k=0}^{T-1} \int_{\W_{i,k \mid t}} \left\|q-p_{i, k \mid t}\right\|^2 \phi(q, t+k) d q, \\
 =&\sum_{k=0}^{T-1} m_{i, k \mid t} \,\mynorm{ \, p_{i, k \mid t}-c_{i, k \mid t}}^2 + d_{i,t},
 \end{aligned} 
\end{equation}
where $d_{i,t} \in \mathbb{R}$ is independent of the decision variables. The mass $m_{i,t}$ and centroid $c_{i,t}$ are defined as 
\begin{align}
    \label{eq:mass}
    m_{i,t} &= \int_{\W_{i,t}} \phi(q,t) d q,\\
    \label{eq:centroid}
    c_{i,t} &= \frac{1}{m_{i,t}}\int_{\W_{i,t}} q \phi(q,t) d q.
\end{align}
Given a partition sequence $\W_{i,\cdot \mid t}$, we can compute optimal trajectories $\opt{r}_{i,\cdot | t} \coloneqq (\opt{x}_{i,\cdot | t}, \opt{u}_{i,\cdot | t})$ independently for each agent $i \in \M$ by minimizing (\ref{eq:modified_coverage_cost}) while considering the agent's constraints, i.e., 
\begin{subequations}\label{eq:planner}
\begin{align}
\opt{r}_{i,\cdot | t} \coloneqq &\underset{x_{i,\cdot | t}, u_{i,\cdot | t}}{\argmin } \, H_i^T\left(p_{i,\cdot | t}, \mathbb{W}_{i,\cdot \mid t}, t\right) \\
\textrm{s.t.} \quad
\label{eq:planner2}
& x_{i, 0 \mid t} = x_{i, T \mid t},\\
\label{eq:planner1}
& x_{i, k + 1 \mid t} = f_i(x_{i,k \mid t}, u_{i,k \mid t}), \\
\label{eq:planner3}
& x_{i,k \mid t} \in \X_i^{\textrm{int}}, u_{i,k \mid t} \in \U_i^{\textrm{int}}, \\
\label{eq:planner4}
& p_{i,k \mid t} := C x_{i,k \mid t} \in \Wbar_{i,k \mid t}^{\textrm{int}}, \\
\label{eq:planner5}
& k=0, \ldots, T-1.
\end{align}
\end{subequations}
The solution of \eqref{eq:planner} yields an optimal\footnote{%
\label{footnote_nonunique}
In case the shifted candidate solution lies in the set of minimizers, we assume that this minimizer is chosen, i.e., $\opt{r}_{i,\cdot |t}=\opt{r}_{i,\cdot + 1|t-1}$.} trajectory $\opt{r}_{i,\cdot | t}$ and corresponding position sequence $\opt{p}_{i,\cdot | t}$ for a specific agent $i$ concerning its partition sequence $\mathbb{W}_{i,\cdot \mid t}$. Since the solution is periodic, we can define the following shifting operation for periodic references and partition sequences
\begin{subequations}
\begin{align}
     \label{eq:shifted_reference_periodic}
     r_{\cdot + n|t} &\coloneqq \{ r_{n|t}, \dots,
        r_{T-1|t}, r_{0|t} \dots, r_{n-1|t}  \},\\
    \label{eq:shifed_partition_periodic}
     \mathbb{W}_{\cdot + n|t} &\coloneqq \{ \mathbb{W}_{\sub[n]}, \dots , 
     \mathbb{W}_{\sub[T-1]},  
     \mathbb{W}_{\sub[0]}, \dots,  \mathbb{W}_{\sub[n-1]} \}.
\end{align}
\end{subequations}
We propose an optimal periodic coverage planner in Algorithm~\ref{alg:planner_convergence}, which performs an update of the partition sequences based on the optimal positions $\opt{p}_{\cdot \mid t}$ using \eqref{eq:voronoi_partition}, and then recomputes the trajectories, repeating this process iteratively until convergence. 
%We note that $\opt{p}_{i,\cdot + 1|t}$, i.e. the one-step shifted solution from the previous iteration, is used if the coverage cost decrease for the minimizer is not strict. 
% In each iteration, we shift the partition sequences and trajectories of length $T$ by one time step, represented with $\opt{p}_{\cdot+1|t}$ and $\W_{i+1|t}$. 
Moreover, it follows from the periodicity of the problem that $\opt{p}_{i, \cdot+T|t} = \opt{p}_{i, \cdot|t}$ and $\W_{i,\cdot+T|t} = \W_{i, \cdot|t}$, which is exploited in the algorithm. 
\begin{algorithm}[H]
  \caption{Optimal Periodic Coverage}
  \label{alg:planner_convergence}
  \SetAlgoLined
  \KwIn{Any $\W_{\cdot|0}$ and $\opt{r}_{\cdot | 0}$, $\opt{p}_{\cdot | 0}$ satisfying constraints in \eqref{eq:planner}}\{\textit{Initial trajectory and position sequence}\}\\
  \For{$t = 0, 1, 2, \dots$}{
      Set $\mathbb{W}_{\cdot|t} = \W_{\opt{p}_{\cdot|t}}$ using \eqref{eq:voronoi_partition} with $\opt{p}_{\cdot|t}$ \\\{\textit{Update partition sequence}\} \\
      Compute $r_{i,\cdot|t+1}^\star$ by solving \eqref{eq:planner} with $\mathbb{W}_{i,\cdot+1|t}$ for $i \in \M$ \{\textit{Compute optimal trajectories}\}
  }
\end{algorithm}
Note that the initial trajectory and position sequence can be chosen as any steady-state condition for each agent. 

Furthermore, we state the following lemma on the generators of a partition sequence.
\begin{lemma}\cite[Lemma 1.]{rickenbach2023} \label{lemma:feasible_partition_update}
    For any $\Bar{p} \in \mathbb{A}^{M}$ that satisfies $\mynorm{\Bar{p_i} - \Bar{p_j}} \geq 2(R_{\max} + \epsilon)$ for all $j \neq i$, it holds that $\Bar{p} \in \W_{\Bar{p}}^{\intr}$. Moreover, $\forall p$ with $\mynorm{\Bar{p} - p} \leq \epsilon$, it holds $p \in \W_{\Bar{p}}$.
\end{lemma}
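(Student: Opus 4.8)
The plan is to reduce both assertions to a single geometric fact: under the stated separation, the closed ball $\mathbb{B}_{R_{\max}+\epsilon}^2$ centered at each generator $\bar p_i$ is contained in its own Voronoi cell $\W_{i,\bar p}$. Since in this lemma the superscript interior must be read together with the radius tightening present in the hypothesis, $\W_{i,\bar p}^{\intr}$ amounts to $\W_{i,\bar p}\ominus\mathbb{B}_{R_{\max}+\epsilon}^2$ (using $\mathbb{B}_{R_{\max}}^2\oplus\mathbb{B}_\epsilon^2=\mathbb{B}_{R_{\max}+\epsilon}^2$). By the definition of the Pontryagin difference, the containment $\mathbb{B}_{R_{\max}+\epsilon}^2(\bar p_i)\subseteq\W_{i,\bar p}$ is then exactly the membership $\bar p_i\in\W_{i,\bar p}^{\intr}$ claimed in the first part, and the second part follows at once, since any $p_i$ with $\|\bar p_i-p_i\|\le\epsilon\le R_{\max}+\epsilon$ lies in that same ball.

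First I would establish the ball containment directly from the definition \eqref{eq:voronoi_partition} of the Voronoi cell via the reverse triangle inequality, which is the heart of the argument. Fixing $i$ and taking any $q\in\mathbb{A}$ with $\|q-\bar p_i\|\le R_{\max}+\epsilon$, for every $j\neq i$ one has
\[
\|q-\bar p_j\|\ge\|\bar p_i-\bar p_j\|-\|q-\bar p_i\|\ge R_{\max}+\epsilon\ge\|q-\bar p_i\|,
\]
where the middle step inserts the separation bound $\|\bar p_i-\bar p_j\|\ge 2(R_{\max}+\epsilon)$ and the ball bound $\|q-\bar p_i\|\le R_{\max}+\epsilon$. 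Hence $q$ is at least as close to $\bar p_i$ as to every other generator, so $q\in\W_{i,\bar p}$; that is, every point of $\mathbb{A}$ within distance $R_{\max}+\epsilon$ of $\bar p_i$ belongs to the cell.

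The step needing the most care is the interaction of this ball with the outer boundary $\partial\mathbb{A}$: because $\W_{i,\bar p}\subseteq\mathbb{A}$, proving $\bar p_i\in\W_{i,\bar p}\ominus\mathbb{B}_{R_{\max}+\epsilon}^2$ in the strict sense requires the entire ball, not merely its intersection with $\mathbb{A}$, to lie inside the cell, whereas the pairwise-separation hypothesis controls only the internal bisector faces between adjacent cells. I would therefore treat the wall contribution separately, invoking the position constraints $p_i\in\Wbar_i^{\intr}$ that already keep generators at distance $R_{\max}+\epsilon$ from $\partial\mathbb{A}$ (equivalently, reading the claim relative to the internal Voronoi faces). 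Once $\mathbb{B}_{R_{\max}+\epsilon}^2(\bar p_i)\subseteq\W_{i,\bar p}$ is secured for all $i$, the first conclusion is immediate, and the second follows by observing that $\|\bar p-p\|\le\epsilon$ forces $\|\bar p_i-p_i\|\le\epsilon$ for each component, so every $p_i\in\mathbb{A}$ lies in the ball and hence in $\W_{i,\bar p}$, i.e.\ $p\in\W_{\bar p}$.
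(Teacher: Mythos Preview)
The paper does not supply a proof of this lemma; it is imported from \cite{rickenbach2023} and stated without argument, so there is no in-paper proof to compare your proposal against. Your reverse-triangle-inequality computation showing that every $q\in\mathbb{A}$ within distance $R_{\max}+\epsilon$ of $\bar p_i$ belongs to $\W_{i,\bar p}$ is the standard route and is correct.

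Two remarks on the write-up. First, a notational slip: in this paper $\W^{\intr}$ means $\W\ominus\mathbb{B}_\epsilon^2$ only (see the paragraph following \eqref{eq:coverage_problem}), not $\W\ominus\mathbb{B}_{R_{\max}+\epsilon}^2$; the latter would be written $\Wbar^{\intr}$. Your reinterpretation is therefore not what the lemma literally asserts, but since the $(R_{\max}+\epsilon)$-ball containment you establish is strictly stronger, the stated conclusion $\bar p_i\in\W_{i,\bar p}\ominus\mathbb{B}_\epsilon^2$ follows a fortiori, and the second part drops out exactly as you say. Second, your caveat about $\partial\mathbb{A}$ is well placed: the hypothesis bounds only pairwise separation, not distance to the arena boundary, so Pontryagin-difference membership can indeed fail near the wall. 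That is an imprecision in the lemma as stated (inherited from the cited source) rather than a flaw in your argument, and appealing to the standing position constraint that keeps generators away from $\partial\mathbb{A}$ is the natural patch.
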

This implies that any trajectory satisfying the collision avoidance constraint \eqref{collision_avoidance} will be contained in the interior of its Voronoi partition sequence. Furthermore, the lemma ensures that an arbitrarily close point $p$ to the generator $\Bar{p}$ is contained in the partition $\W_{\Bar{p}}$.

To establish the convergence of Algorithm~\ref{alg:planner_convergence} to an optimal periodic configuration (Def. \ref{def:optimal_periodic_configuration}), we introduce the following assumption, commonly considered in coverage control~\cite{Corts2002CoverageCF}.

\begin{assumption} \label{assumption:fintie_centroidal_periodic_configuration}
The set of optimal periodic configurations is finite.
\end{assumption} 

The convergence properties of Algorithm~\ref{alg:planner_convergence} are formalized in the following proposition:
\begin{proposition} \label{proposition:planner_convergence}
%Consider any initial partition sequence $\mathbb{W}_{\cdot|t}$ at $t=0$ based on a periodic trajectory $r_{\cdot|t}$ satisfying the constraints in ~\eqref{eq:planner}. 
%Consider an iteration where for every $t\in\mathbb{N}$, $\mathbb{W}_{\cdot|t} = \W_{\opt{p}_{\cdot,t}}$ is the optimal Voronoi tessellation \eqref{eq:voronoi_partition} w.r.t. the periodic position trajectories $p_{\cdot|t}^\star$. Subsequently, $p_{i,\cdot|t+1}^\star$ is a minimizer of Problem \eqref{eq:planner} for each agent $i$ with partition sequence $\mathbb{W}_{i,\cdot+1|t}$. 

 Suppose that Assumption~\ref{assumption:fintie_centroidal_periodic_configuration} is satisfied. Then, Algorithm~\ref{alg:planner_convergence} is feasible for all $t\in\mathbb{N}$ and the sequences $\mathbb{W}_{\cdot|t}$, $r^\star_{\cdot|t}$ converge to an optimal periodic configuration (Def. \ref{def:optimal_periodic_configuration}) as $t\rightarrow\infty$.
\end{proposition}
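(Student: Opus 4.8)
The plan is to read Algorithm~\ref{alg:planner_convergence} as a Lloyd-type alternating minimization and to certify convergence through a monotonically non-increasing period cost, combined with the finiteness Assumption~\ref{assumption:fintie_centroidal_periodic_configuration}. As the monotone certificate I would use the total locational cost over one period, $V_t := \sum_{i\in\M} H_i^T(\opt{p}_{i,\cdot|t},\mathbb{W}_{i,\cdot|t},t)$ evaluated at the Voronoi partition $\mathbb{W}_{\cdot|t}=\W_{\opt{p}_{\cdot|t}}$ formed at each iteration, written in its manifestly nonnegative integral form from~\eqref{eq:modified_coverage_cost}.

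First I would establish recursive feasibility by induction. Given a feasible $\opt{r}_{\cdot|t}$ whose periodic positions $\opt{p}_{\cdot|t}$ satisfy the $2(R_{\max}+\epsilon)$ separation, Lemma~\ref{lemma:feasible_partition_update} gives $\opt{p}_{i,k|t}\in\W_{i,k|t}^{\intr}$ after the Voronoi update. The shift operation~\eqref{eq:shifted_reference_periodic}, applied jointly to the periodic trajectory and the partition, preserves the dynamics~\eqref{eq:planner1}, the periodicity constraint~\eqref{eq:planner2}, the state/input constraints~\eqref{eq:planner3}, and the in-partition constraint~\eqref{eq:planner4}; hence the shifted solution $\opt{r}_{\cdot+1|t}$ is a feasible candidate for the planner~\eqref{eq:planner} solved with $\mathbb{W}_{\cdot+1|t}$. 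Since the partitions are Voronoi and tessellate $\mathbb{A}$, the tightened constraint~\eqref{eq:planner4} forces each position to lie at least $R_{\max}+\epsilon$ from its cell boundary, which re-establishes the $2(R_{\max}+\epsilon)$ margin needed by Lemma~\ref{lemma:feasible_partition_update} at the next step, closing the induction.

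Next I would prove $V_{t+1}\le V_t$ in two steps. The trajectory re-optimization~\eqref{eq:planner} minimizes the period cost over the fixed partition $\mathbb{W}_{\cdot+1|t}$; the shifted feasible candidate $\opt{p}_{\cdot+1|t}$ has, by periodicity of $\phi$ and a re-indexing of the sum, cost exactly $V_t$, so the re-optimized positions attain some $\tilde V_{t+1}\le V_t$ against that partition. The subsequent Voronoi update at step $t+1$ replaces the partition by $\W_{\opt{p}_{\cdot|t+1}}$; because the tessellation~\eqref{eq:voronoi_partition} assigns every $q$ to its nearest generator, it minimizes the cost pointwise over all partitions for fixed positions, giving $V_{t+1}\le\tilde V_{t+1}\le V_t$. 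Chaining these yields monotonicity, and since $V_t\ge 0$ is bounded below, $V_t$ converges and $V_t-V_{t+1}\to 0$.

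Finally I would convert cost convergence into iterate convergence. At a fixed point both inequalities hold with equality: equality in the trajectory step forces the shifted trajectory to itself be optimal (here I invoke the tie-breaking convention of the footnote so that $\opt{r}_{\cdot|t+1}=\opt{r}_{\cdot+1|t}$), and equality in the partition step forces the positions to already generate their Voronoi partition, which is precisely an optimal periodic configuration (Def.~\ref{def:optimal_periodic_configuration}). Using compactness from Assumption~\ref{assumption1}, the iterates stay in a compact set and thus possess limit points, and continuity of $V$ together with $V_t-V_{t+1}\to 0$ shows every limit point is such a fixed point. I expect the main obstacle to be this last upgrade from subsequential convergence to convergence to a \emph{single} configuration: this is exactly where Assumption~\ref{assumption:fintie_centroidal_periodic_configuration} is essential, via a discrete LaSalle-type argument showing that the $\omega$-limit set is a connected subset of the finite set of optimal configurations and therefore a singleton, thereby ruling out persistent oscillation between distinct equal-cost minimizers that the receding-horizon shift might otherwise permit.
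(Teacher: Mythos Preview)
Your proposal is correct and follows essentially the same route as the paper: feasibility via Lemma~\ref{lemma:feasible_partition_update} and the periodic shift, a two-step monotone decrease of the period coverage cost (Voronoi optimality for the partition step, feasibility of the shifted trajectory for the re-optimization step, tied together by shift invariance of $H^T$), and compactness from Assumption~\ref{assumption1}. The only cosmetic difference is that the paper delegates the final passage from ``$V_t$ is a descent function on a compact set'' to ``the iterates converge to a single optimal periodic configuration'' to \cite[Lemma~1.3, Prop.~1.4]{Corts2002CoverageCF}, whereas you spell out the discrete LaSalle/$\omega$-limit argument explicitly; your identification of Assumption~\ref{assumption:fintie_centroidal_periodic_configuration} as precisely the ingredient needed to collapse the limit set to a singleton is on point.
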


\begin{proof} The proof builds upon the proof in \cite[Prop. 3.3]{Corts2002CoverageCF}.
We first verify that $p_{\cdot|t}^\star$ remains feasible in $\W_{\opt{p}_{i,\cdot|t}}$, i.e., the position sequence satisfies the constraints also after a partition update. This condition is guaranteed for any trajectory that respects the constraints in \eqref{eq:planner}, as established by Lemma~\ref{lemma:feasible_partition_update} on the properties of the Voronoi tessellation \eqref{eq:voronoi_partition}. Note that Algorithm~\ref{alg:planner_convergence} is initialized with a feasible solution to Problem~\eqref{eq:planner}, and hence the optimization problems in Algorithm~\ref{alg:planner_convergence} are feasible for all $t\in\mathbb{N}$. 
Furthermore, Assumption~\ref{assumption1} ensures that all variables lie in compact sets. 
%Therefore, with Assumption~\ref{assumption1}, the vehicle lies in a compact, positively invariant set.

%First, it must be verified that $p_{\cdot|t}^\star$ is feasible in $\W_{\opt{p}_{i,\cdot|t}}$, i.e. it is still valid after a partition update. This condition always holds for each trajectory respecting constraints in \eqref{eq:planner} as per construction of the Voronoi tesselation \eqref{eq:voronoi_partition} (see Lemma~\ref{lemma:feasible_partition_update}). Thus, the iteration is recursively feasible provided that the initial trajectories respect the constraints in \eqref{eq:planner}.

%Moreover, we show that the coverage cost decreases at each iteration and represents a descent function, i.e.
Next, we show that
\begin{equation}
    H^T\left(p_{\cdot\mid t},\mathbb{W}_{\cdot \mid t}, t\right) := \sum_{i=1}^{M} H_i^T\left(p_{i, \mid t},\mathbb{W}_{i,\cdot \mid t}, t\right),
\end{equation}
is a descent function for Algorithm~\ref{alg:planner_convergence}, i.e., the coverage cost decreases at each iteration. Specifically, it holds that
\begin{subequations}
\begin{align}
    \begin{split}
    & H^T(\opt{p}_{\cdot|t}, \W_{\opt{p}_{\cdot+1|t-1}}, t)\\
    &\geq H^T(\opt{p}_{\cdot|t}, \W_{\opt{p}_{\cdot|t}}, t) \label{eq:coverage_descent_update_partition}\\ 
    \end{split}\\
    \label{eq:coverage_descent_equal}
    &= H^T(\opt{p}_{\cdot+1|t}, \W_{\opt{p}_{\cdot + 1|t}}, t + 1)\\
    &\geq H^T(\opt{p}_{\cdot|t+1}, \W_{\opt{p}_{\cdot + 1|t}}, t + 1) \label{eq:coverage_descent_minimizer}\\
    &\geq 0, \forall \, t \in \mathbb{N} \label{coverage_descent_upper_bound}.
\end{align}
\end{subequations}
Inequality~\eqref{eq:coverage_descent_update_partition} follows from the properties of the Voronoi tessellation \eqref{eq:voronoi_partition}: every partition update reduces the overall coverage cost since each Voronoi partition $\W_{\opt{p}_{j|t}}$ is optimal for the corresponding set of positions ${p}_{j|t}$, for $j \in [0,\ldots, T-1]$  \cite{Corts2002CoverageCF}. Consequently, the coverage cost strictly decreases for any partition in the sequence $\W_{\opt{p}_{\cdot|t}}$ that does not correspond to a Voronoi tessellation.
%The relation in \eqref{eq:coverage_descent_update_partition} can be shown as follows: every partition update with \eqref{eq:voronoi_partition} can only lead to a decrease in the overall coverage cost since each Voronoi partition $\W_{\opt{p}_{j|t}}$ is the optimal partition for the set of positions ${p}_{j|t}$, for $j = 0,\ldots,T-1$  \cite{Corts2002CoverageCF}. Consequently, the coverage cost decreases for each element in the partition sequence $\W_{\opt{p}_{\cdot|t}}$, and this decrease is strict if any element in the partition sequence does not correspond to a Voronoi tesselation.
%Furthermore, Inequality \eqref{eq:coverage_descent_minimizer} holds since  $p_{\cdot|t}^\star$ is feasible in the new partition sequence $\W_{\opt{p}_{i,\cdot|t}}$, and due to the periodicity constraint, the shifted trajectory $r_{i,\cdot|t+1} = \opt{r}_{i,\cdot + 1|t}$ is a candidate solution that provides an upper bound to the coverage cost for the minimizer of \eqref{eq:planner} for each agent $i \in \M$. Furthermore, we use $\opt{p}_{i,\cdot + 1|t}$ if the coverage cost decrease of this minimizer is not strict.
Equality~\eqref{eq:coverage_descent_equal} uses shift invariance of the coverage cost. 
Inequality~\eqref{eq:coverage_descent_minimizer} holds because $p_{\cdot|t}^\star$ remains feasible in the new partition sequence $\W_{\opt{p}_{i,\cdot|t}}$. Under the periodicity constraint, the shifted trajectory $r_{i,\cdot|t+1} = \opt{r}_{i,\cdot + 1|t}$ serves as a candidate solution, providing an upper bound to the coverage cost of the minimizer of \eqref{eq:planner} for each agent $i \in \M$. 
Furthermore, \eqref{coverage_descent_upper_bound} is satisfied since $\phi(q) \in \mathbb{R}_{+} \, \forall q \in \mathbb{A}$. Finally, note that we select the minimizer $p^\star_{\cdot|t+1}=p^\star_{\cdot+1|t}$ in case the cost remains constant (see Footnote~\ref{footnote_nonunique}). 

Under these conditions, the sequences $\mathbb{W}_{\cdot|t}$, $r^\star_{\cdot|t}$ converge to an optimal periodic configuration (Def.~\ref{def:optimal_periodic_configuration}) as $t\rightarrow\infty$ \cite[Lemma 1.3, Prop. 1.4]{Corts2002CoverageCF}.
\end{proof}

Note that for a period length $T=1$, and assuming that all positions accept a feasible steady-state, Algorithm~\ref{alg:planner_convergence} reduces to Lloyd's algorithm~\cite{lloyd}.

\section{Periodic Coverage MPC}
\label{section:coverage_control_mpc}
The previous section provides an effective method for computing periodic trajectories that converge to an optimal periodic configuration. However, this method solely focuses on trajectory planning and does not address real-time feedback required to track the reference trajectory. 
Note that the periodic planning problem is computationally complex and can typically not be solved at the fast update rates required for real-time feedback. 
%Note that the computational complexity of the periodic planning algorithm prevents the algorithm from operating at the fast update rates needed for real-time control.

To tackle this problem, we propose a novel algorithm comprising a reference trajectory planner and a tracking MPC, operating at different frequencies in a multi-rate scheme. The proposed method decouples the planning and tracking tasks, enhancing both computational efficiency and real-time applicability in multi-agent systems. Ultimately, our approach is implemented in a distributed manner.

\subsection{Control Architecture}
\label{subsection:algorithm}
This section presents the proposed control architecture for optimal periodic coverage, which consists of a reference trajectory planner and a tracking MPC.

The reference trajectory planner is executed every $K$ steps, where $K \in \mathbb{N}$ is a user-defined constant. The planner computes optimal and reachable reference trajectories by solving an optimization problem similar to \eqref{eq:planner} for a given partition sequence $\W_{i,\cdot|t}$. Importantly, we will introduce a \textit{coupling constraint} that links the reference planner with the tracking MPC, ensuring that each computed reference remains feasible within the tracking controller’s limitations.

At every time $t$, the tracking MPC aims to follow the planner’s reference trajectory and applies an optimal control input to the system, while ensuring that the vehicle's motion respects all constraints.

The reference planner and tracking MPC operate locally and in parallel with the agent’s central processing unit. Communication occurs every $K$ steps to broadcast the agent-specific reference trajectories, which are required to recompute the partition sequences. Thereafter, a consensus is established across agents to ensure compatibility with the newly computed partition sequence, promoting both collision avoidance and mutual feasibility across all agents’ trackers. Finally, the process is repeated until convergence.

The complete control architecture is provided in Algorithm \ref{alg:decoupled_coverage_control} and illustrated in Figure \ref{fig:algorithm_decoupled_coverage_control}. At initialization, each agent $i$ is assumed to start in a steady-state $x_{i,0} = f_i(x_{i,0}, u_{i,0}) \in \X^{\intr}_i$ with $u_{i,0} \in \U^{\intr}_i$ for all $i \in \M$. In addition, all agents are located at least~$2(R_{\max} + \epsilon)$ apart. This ensures that each agent begins from a feasible operating point and can respect the collision avoidance constraints.

\begin{algorithm}
  \caption{Periodic Coverage MPC}
  \label{alg:decoupled_coverage_control}
  \SetAlgoLined
  \KwIn{$\phi$, $x_{i,0} = f_i(x_{i,0}$, $u_{i,0}) \in \X^{\intr}, \;  u_{i,0} \in \U^{\intr}, \; \forall i \in \M$\\
  $\quad \quad \quad\|p_{i,0} - p_{j,0}\| \geq 2(R_{\max} + \epsilon), \quad \forall i \neq j \in \M$}
  
    Set $\mathbb{W}_{\sub} = \{\mathbb{W}_{p_0}, \dots, \mathbb{W}_{p_0}\}$
    
    Compute $\opt{r}_{i,\cdot|0}$ with \eqref{eq:periodic_reference_planner} \{\textit{Initial reference planning}\}
    
        \For{$t = 0, 1, \dots$}{
            \If{t mod K = $0$}{
            Construct candidate $\hat{x}_{i, \cdot | t}, \hat{u}_{i, \cdot | t}$ with \eqref{eq:fast_coverage_candidate_sequence}
            
            \If{Candidates satisfy \eqref{eq:partition_update_condition} $\forall i \in \M$}{
                    \{\textit{Consensus}\}
                    
                    Set $\W_{\sub} = \W_{\opt{p}_{\cdot|t}}$ \{\textit{Partition update}\}
                }
                Compute $\opt{r}_{i,\cdot|t + K}$ with \eqref{eq:periodic_reference_planner} \{\textit{Reference planning}\}
                
                Communicate $\opt{p}_{i,\cdot|t + K}$ to other agents
                
                Compute $\W_{\opt{p}_{\cdot|t + K}}$ with \eqref{eq:voronoi_partition}
            }
            Solve MPC \eqref{eq:tracking_feasible_trajectory} with $\opt{r}_{i,\cdot|t}, \W_{i,\cdot|t}$ to obtain $\optt{\uti}$ \\\{\textit{Tracking MPC}\}
            
            Apply $u_{i,t}=\optt{\uti[0]}$ 

            $\opt{r}_{i,\cdot|t+1} = \opt{r}_{i,\cdot + 1|t}$ \eqref{eq:shifted_reference_periodic}, $\W_{i,\subplus} = \W_{i,\cdot + 1|t}$ \eqref{eq:shifed_partition_periodic}
            
            \{\textit{Shift reference and partitions}\}
        }        
\end{algorithm}

\begin{figure}[!t]
\centerline{\includegraphics[width=1.0\columnwidth]{images/scheme.png}}
\caption{Graphical illustration of the proposed architecture integrating a tracking MPC with a reference trajectory planner. Every $K$ time steps, agents compute new reference trajectories $\opt{r_{i,\cdot|t+K}}$, exchange them to construct the candidate partitions $\W_{\opt p_{\cdot|t+K}}$, and reach consensus on admissible partition sequences $\W_{i,\cdot|t}$ by verifying the feasibility of the candidate trajectories $\hat{x}_{i,\cdot|t}$. The tracking MPC uses the partitions $\W_{i,\cdot|t}$, reference $\opt{r_{i,\cdot|t}}$, and current state $x_{i,t}$ of the vehicle to compute and apply the control input $u_{i,t}$ at every time step.}
\label{fig:algorithm_decoupled_coverage_control}
\end{figure}

\subsection{Nonlinear Tracking MPC}
In this section, we introduce a tracking MPC that enables the system to follow a given reference trajectory. We leverage a design without terminal ingredients, avoiding their challenging design requirements \cite{koehlerECO,grune2017nonlinear}. 

We first introduce a definition for reachable references that can be tracked effectively:

\begin{definition}
\label{def:reachable_reference}
    A reference partition sequence $\W_{\cdot|t}$ and a reference trajectory $r_{\cdot | t} = (x^r_{\cdot | t},u^r_{\cdot | t})$ are considered to be \textit{reachable} if they satisfy the constraints in \eqref{eq:planner1}, \eqref{eq:planner3} and~\eqref{eq:planner4} for $k \in \mathbb{N}$.
\end{definition}

Consider a horizon of length $N_i$ and the tracking cost function
\begin{equation}
    \JNi(\xti, \uti, \rti) = \sum_{k=0}^{N_i-1} \ell_i(\xti[k], \uti[k], \rti[k]),
\end{equation}
where the cost $\JNi$ is composed of the sum of the stage costs $\ell_i$ over the finite horizon $N_i$. In this article, we will focus on quadratic-stage costs for simplicity of exposition
\begin{equation}
    \ell_i(x_i, u_i, r_i) = \mynorm{x_i - x_i^r}^2_{Q_i} + \mynorm{u_i -  u_i^r}^2_{R_i},
\end{equation}
where $Q_i,R_i \succ 0$ and $r_i = (x_i^r,u_i^r) \in \X_i \times \U_i$ and 
\begin{equation} \label{eq:bounds_on_stage_cost}
        \alpha_{1,i}\mynorm{x_i - x_i^r}^2 \leq \mynorm{x_i -x_i^r}^2_{Q_i} \leq \alpha_{2,i}\mynorm{x_i - x_i^r}^2, 
\end{equation}
with $\alpha_{1,i} = \lambda_{\text{min}}(Q_i)$, $\alpha_{2,i} = \lambda_{\text{max}}(Q_i)$. 
%The proposed framework can be naturally extended to non-quadratic stage cost functions, similar to~\cite{rickenbach2023}.

\begin{comment}
Furthermore,  
\begin{equation}
\label{eq:l_star}
\opt{\ell_i}(x_i, r_i) \coloneqq \ell_i(x_i, u_i^r, r_i), \:\: \text{where} \: \opt{u}_i = u_i^r,
\end{equation}
i.e. the stage cost is trivially minimized by the reference input.
\end{comment}
\begin{comment}
\begin{assumption} 
\label{assumption:stage_cost}
    There exist constants $\alpha_{1,i}, \alpha_{2,i} > 0$ such that the following holds for all $x_i \in \X_i, u_i \in \U_i$ and any reference point $r_i = (x_i^r, u_i^r)  \in \X_i \times \U_i$:
    \begin{equation*} \label{eq:bounds_on_stage_cost}
        \alpha_{1,i}\mynorm{x_i - x_i^r}^2 \leq \opt{\ell_i}(x_i, r) \leq \alpha_{2,i}\mynorm{x_i - x_i^r}^2 
    \end{equation*}
    where 
    \begin{align} \label{eq:one_step_optimal_stage_cost}
        \opt{\ell_i}(x_i, r_i) \coloneqq \min_{u_i} \quad & \ell_i(x_i, u_i, r_i) \quad 
        \textrm{s.t.} \quad  x_i \in \X_i, u_i \in \U_i.
    \end{align}

    is minimized by the input $\opt{u}_i = u_i^r$ .

    Additionally, there exist $k_{0,i}^l, k_{1,i}^l > 0$ such that for any two reachable references $\rti, \hat{r}_{\subi}$, it holds
    \begin{equation*}
        \ell_i(x_i, u_i, \rti[k]) \leq k_{0,i}^l \ell_i(x_i, u_i, \hat{r}_{\subi[k]}) + k_{1,i}^l \mynorm{\rti[k] - \hat{r}_{\subi[k]}}^2.
    \end{equation*}
\end{assumption}
\end{comment}

The following tracking MPC formulation is used to track the reachable reference $\rti$ for agent $i$.
\begin{subequations} \label{eq:tracking_feasible_trajectory}
\begin{align} 
    \VNi(x_{i,t}, \rt, \W_{i,\cdot|t}) =& \, \min_{\uti} \JNi(\xti, \uti, \rti) \\
    \textrm{s.t.} \quad & \xti[0] = x_{i,t},  \\
    & \xti[k+1] = f(\xti[k], \uti[k]), \\
    \label{mpc_state_and_input_constraints}
    & \xti[k] \in \X_i, \uti[k] \in \U_i  \\ 
    \label{mpc_partition_constraints}
    & p_{i,k \mid t} := C x_{i,k \mid t} \in \Wbar_{i,k \mid t},\\ 
    & k = 0, \ldots, N_i - 1.
\end{align}
\end{subequations}
Solving \eqref{eq:tracking_feasible_trajectory} yields an input trajectory $\optt{\uti}$ and a corresponding state trajectory $\optt{\xti}$ of length $N_i$. The loop is closed by applying the first input of $\optt{\uti}$, i.e.
\begin{equation} \label{eq:closed_loop_standard_tracking}
    x_{i,t+1} = f(x_{i,t}, \optt{\uti[0]}).
\end{equation}

We further require a local stabilizability condition \cite[Assumption 1]{grune2017nonlinear} for the discussion on the stability properties of the tracking MPC.
\begin{assumption} \label{assumption:local_controllability}
    There exist $\gamma_i, c_i > 0$ such that for any \mbox{$N_i \in \mathbb{N}$}, any reachable reference $\rti$, and any $x_{i,t}$ with \mbox{$\mynorm{x_{i,t}-x_{i,0|t}^r}^2_{Q_i} \leq c_i$}, we have 
    \begin{equation*} \label{eq:local_controllability}
        \VNi(x_{i,t}, \rti) \leq \gamma_i \mynorm{x_{i,t} -x_{i,0|t}^r}^2_{Q_i} .
    \end{equation*}
\end{assumption}
Note that Assumption \ref{assumption:local_controllability} is valid if the linearized
dynamics are stabilizable \cite{grune2017nonlinear}.

Furthermore, the following theorem summarizes the properties of the tracking MPC

\begin{theorem} \label{theorem:stability_of_tracking_mpc}
    (\cite[Thm.~1]{koehlerECO})
    Let Assumption~\ref{assumption:local_controllability} hold, suppose that $r_{i,\cdot|t}$ is a reachable reference (Def.~\ref{def:reachable_reference}) with $r_{i,\cdot|t+1}:=r_{i,\cdot+1|t}$, and consider an initial condition satisfying $\VNi(x_{i,t}, \rti) \leq V_{\max,i}$ at $t=0$. Then, for any constants $V_{\max,i}>0$, $\alpha_{N,i} \in (0,1)$, $\Bar{\gamma}_i \coloneqq \max\left\{\gamma_i, \frac{V_{\max,i}}{c_i}\right\}$, there exists a horizon $N_{0,i}\in\mathbb{N}$, such that for all $N_i\geq N_{0,i}$ Problem~\eqref{eq:tracking_feasible_trajectory} is recursively feasible, the closed-loop system~\eqref{eq:closed_loop_standard_tracking} converges exponentially to the reachable reference $\rti$, and the value function satisfies
    \begin{subequations}
    \begin{align}
    \begin{split} \label{eq:smth}
        &\VNi(x_{i,t}, \rti) \leq \Bar{\gamma}_i \mynorm{x_{i,t} - x_{i,0|t}^r}^2_{Q_i},\\
    \end{split}\\
    \begin{split} \label{eq:lyapunov_decrease_standard_tracking}
     &\VNi(x_{i,t+1}, \rtiplus) - \VNi(x_{i,t}, \rti) \\
    & \quad \quad \leq - \alpha_{N,i} \ell_i(x_{i,t}, \opt{\uti[0]}, \rti[0]).
    \end{split}
    \end{align}
    \end{subequations}
%    Let Assumption~\ref{assumption:local_controllability} hold, and consider any constants $V_{\max,i}>0$, $\Bar{\gamma}_i \coloneqq \max \{\gamma_i, \frac{V_{\max,i}}{c_i}\}$ and $\alpha_{N,i} \in (0,1)$. Further, there exists an $N_{0,i} \in \mathbb{N}$ such that for all $N_i \geq N_{0,i}$ it holds that $\VNi(x_{i,t}, \rti) \leq V_{\max,i}$ at $t=0$.
%    Then, problem \eqref{eq:tracking_feasible_trajectory} is recursively feasible, and the closed-loop system~\eqref{eq:closed_loop_standard_tracking} converges exponentially to the reachable reference $\rti$.     
\end{theorem}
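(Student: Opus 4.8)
The plan is to follow the standard analysis of MPC without terminal ingredients (as in \cite{grune2017nonlinear,koehlerECO}), establishing in turn the two-sided quadratic bounds on the value function, recursive feasibility via a shifted candidate, and finally the Lyapunov decrease, from which exponential convergence follows. The structural fact that enables the whole argument is that the reference satisfies the shift property $r_{i,\cdot|t+1} = r_{i,\cdot+1|t}$ and is reachable (Def.~\ref{def:reachable_reference}), so that the tail of any feasible trajectory remains compatible with the shifted reference and the tightened partition constraints.

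First I would establish the bounds on $\VNi$. The lower bound $\VNi(x_{i,t},\rti) \geq \alpha_{1,i}\mynorm{x_{i,t}-x_{i,0|t}^r}^2$ is immediate from retaining only the first stage cost and applying~\eqref{eq:bounds_on_stage_cost}. For the upper bound, Assumption~\ref{assumption:local_controllability} directly gives $\VNi \leq \gamma_i \mynorm{x_{i,t}-x_{i,0|t}^r}^2_{Q_i}$ on the small set $\mynorm{x_{i,t}-x_{i,0|t}^r}^2_{Q_i}\leq c_i$. Extending this estimate to the full sublevel set on which $\VNi \leq V_{\max,i}$ (whose extent is controlled through the lower bound) and choosing $\bar{\gamma}_i = \max\{\gamma_i, V_{\max,i}/c_i\}$ yields~\eqref{eq:smth}.

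Next I would prove recursive feasibility by constructing, at time $t+1$, a candidate input that concatenates the shifted tail of the previous optimal input sequence $\optt{\uti}$ with a short local feedback steering the state toward the reference tail. Reachability of the shifted reference guarantees that the first $N_i-1$ candidate steps satisfy the state, input, and tightened partition constraints~\eqref{mpc_state_and_input_constraints}--\eqref{mpc_partition_constraints}, while Assumption~\ref{assumption:local_controllability} supplies a feasible final step on the invariant sublevel set. Evaluating the cost of this candidate gives the standard inequality
\begin{equation*}
    \VNi(x_{i,t+1}, \rtiplus) \leq \VNi(x_{i,t}, \rti) - \ell_i(x_{i,t}, \opt{\uti[0]}, \rti[0]) + \delta_{N_i},
\end{equation*}
where $\delta_{N_i}$ collects the terminal correction incurred by replacing the final stage with the appended local step.

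The main obstacle is to control the correction term $\delta_{N_i}$ and thereby obtain the strict decrease~\eqref{eq:lyapunov_decrease_standard_tracking}. This is where the relaxed dynamic programming argument of \cite{grune2017nonlinear} enters: combining the quadratic upper bound with local stabilizability, one shows that the tail value function approximates the full length-$N_i$ value function uniformly on the relevant sublevel set, with an error that decays as $N_i$ grows. Consequently $\delta_{N_i} \leq (1-\alpha_{N,i})\,\ell_i(x_{i,t},\opt{\uti[0]},\rti[0])$ for a performance index satisfying $\alpha_{N,i}\to 1$ as $N_i\to\infty$, so that a finite horizon $N_{0,i}$ guarantees $\alpha_{N,i}\in(0,1)$ and hence~\eqref{eq:lyapunov_decrease_standard_tracking}. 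Finally, combining this decrease with the two-sided quadratic bounds shows that $\VNi$ acts as a Lyapunov function decreasing geometrically along the closed loop~\eqref{eq:closed_loop_standard_tracking}, which renders the sublevel set invariant (recursive feasibility) and yields exponential convergence of $x_{i,t}$ to the reachable reference.
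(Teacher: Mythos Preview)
The paper does not actually prove this theorem: it is quoted verbatim from \cite[Thm.~1]{koehlerECO}, and the only text following the statement is the remark that ``the constants $\alpha_{N,i}$ and $N_{0,i}$ can be found in \cite{koehlerECO}.'' So there is no in-paper proof to compare against; the result is imported wholesale.

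Your sketch is a faithful outline of the argument in the cited reference (and more generally of the relaxed dynamic programming analysis of MPC without terminal ingredients from \cite{grune2017nonlinear}). The two-sided bound via~\eqref{eq:bounds_on_stage_cost} and Assumption~\ref{assumption:local_controllability}, the case split yielding $\bar\gamma_i=\max\{\gamma_i,V_{\max,i}/c_i\}$, the shifted candidate appended with a locally stabilizing step (feasible because the reachable reference lies in the $\epsilon$-interior of the constraints, cf.\ Def.~\ref{def:reachable_reference}), and the control of the terminal correction $\delta_{N_i}$ via the tail-cost estimate are exactly the ingredients used there. One minor sharpening: feasibility of the appended step relies specifically on the strict interior condition in Def.~\ref{def:reachable_reference} together with the bound on the terminal tracking error implied by $\VNi\le V_{\max,i}$, rather than on Assumption~\ref{assumption:local_controllability} alone; you allude to this but it is worth making explicit.
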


%From the previous theorem, it follows that
%    \begin{align*}
%        & \alpha_{1,i} \mynorm{x_{i,t} - x_{i,t}^r}^2 \leq \VNi(x_{i,t}, \rt) \leq \Bar{\gamma}_i \alpha_{2,i} \mynorm{x_{i,t} - x_{i,t}^r}^2,
%    \end{align*}
%    where the lower bound is derived with \eqref{eq:bounds_on_stage_cost} and the definition of $\JNi$
%    \begin{align}
%        & \VNi(x_{i,t}, \rti) \geq \ell_i(x_{i,t}, \opt{\uti[0]}, \rti[0]) \\
%        & \quad \geq \opt{\ell_i}(x_{i,t}, \rti[0]) \geq \alpha_{1,i} \mynorm{x_{i,t} - x_{i,t}^r}^2,
%    \end{align}
%    while the upper bound follows from \eqref{eq:bounds_on_stage_cost} and Assumption  \ref{assumption:local_controllability}
%    \begin{equation}
 %       \VNi(x_{i,t}, \rti) \leq \Bar{\gamma_i} \opt{\ell_i}(x_{i,t}, \rti[0]) \leq \Bar{\gamma} \alpha_{2,i} \mynorm{x_{i,t} - x_{i,t}^r}^2.
 %   \end{equation}
 %   Moreover, it holds that 
 %   \begin{align}
 %   \begin{split}
 %       & \quad  \quad  \VNi(x_{i,t+1}, \rtiplus) - \VNi(x_{i,t}, \rti)\\ 
 %       &\quad \quad \quad \leq - \alpha_{N,i} \ell_i(x_{i,t}, %\opt{\uti[0]}, \rti[0]),\label{eq:lyapunov_decrease_standard_tracking}
        %\end{split}
%    \end{align}
%    where 
The constants $\alpha_{N,i} $ and $N_{0,i}$ can be found in \cite{koehlerECO}.

\subsection{Periodic Coverage Planner}
\label{section:decoupled_coverage_planner}
It is important to note that running the reference planner independently of the tracking MPC results in references that do not satisfy the conditions outlined in Theorem~\ref{theorem:stability_of_tracking_mpc}, thereby losing all closed-loop guarantees of the tracking MPC. For this reason, we leverage the following proposition.
\begin{proposition}
\label{proposition:coupling_constraint}
Let Assumption~\ref{assumption1} hold and consider the closed-loop system according to Algorithm~\ref{alg:decoupled_coverage_control}. There exists a constant $\LipV>0$ such that, given a reachable reference $\opt{r}_{\subi}$ and partition sequence $\mathbb{W}_{i, \sub}$, any new reachable reference trajectory $\opt{r}_{\subiplusk}$ satisfying 
\begin{align} \label{eq:periodic_coupling_constraint_derivation}
\begin{split}
    & \mynorm{\opt{r}_{\subi[K:K+N]} - \opt{r}_{\subiplusk[0:N]}}\\
    &\quad\leq  \frac{1}{\LipV} \left( V_{\max, i} - \left( 1 - \frac{\alpha_{N,i}}{\gammabari} \right)^K \VNit  \right) := \C(\VNit),
\end{split}
\end{align}
with $\VNit = \VNi(x_{i, t}, \opt{r}_{i,\cdot|t},\W_{i,\cdot|t})\leq \Vmaxi$ ensures that at time $t+K$, the optimal cost of the tracking MPC fulfills 
\begin{equation} \label{eq:ref_v_max}
    \VNi(x_{i, t+K}, \opt{r}_{\subiplusk}, \mathbb{W}_{i, \subplusk}) \leq V_{\max, i}.
\end{equation}
\end{proposition}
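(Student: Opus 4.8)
The plan is to establish \eqref{eq:ref_v_max} in two stages: first I would track how the tracking value function decays over the $K$ steps during which the old reference $\opt{r}_{\subi}$ is followed, and then I would bound the additional cost incurred by switching to the new reference $\opt{r}_{\subiplusk}$ at time $t+K$ through a Lipschitz estimate of $\VNi$ with respect to the reference.

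For the first stage, I would combine the two bounds of Theorem~\ref{theorem:stability_of_tracking_mpc}. Since $R_i\succ0$ gives $\ell_i(x_{i,t},\opt{\uti[0]},\rti[0]) \geq \mynorm{x_{i,t}-x_{i,0|t}^r}^2_{Q_i}$, and rearranging \eqref{eq:smth} gives $\mynorm{x_{i,t}-x_{i,0|t}^r}^2_{Q_i} \geq \tfrac{1}{\gammabari}\VNi(x_{i,t},\opt{r}_{\subi},\mathbb{W}_{i,\sub})$, the Lyapunov decrease \eqref{eq:lyapunov_decrease_standard_tracking} yields the one-step contraction $\VNi(x_{i,t+1}, \opt{r}_{i,\cdot+1|t}) \leq (1-\tfrac{\alpha_{N,i}}{\gammabari})\VNi(x_{i,t},\opt{r}_{\subi})$ with contraction factor in $[0,1)$. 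Because $\VNit \leq V_{\max,i}$ and the value is non-increasing, the hypotheses of Theorem~\ref{theorem:stability_of_tracking_mpc} (in particular $r_{i,\cdot|t+1}:=r_{i,\cdot+1|t}$) hold at each of the $K$ steps, so iterating the contraction while following the shifted reference produces $\VNi(x_{i,t+K}, \opt{r}_{i,\cdot+K|t}, \mathbb{W}_{i,\cdot+K|t}) \leq (1-\tfrac{\alpha_{N,i}}{\gammabari})^K \VNit$.

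For the second stage, I would prove the existence of $\LipV>0$ with $\VNi(x_{i,t+K}, \opt{r}_{\subiplusk}, \mathbb{W}_{i,\subplusk}) \leq \VNi(x_{i,t+K}, \opt{r}_{i,\cdot+K|t}, \mathbb{W}_{i,\cdot+K|t}) + \LipV\mynorm{\opt{r}_{\subi[K:K+N]} - \opt{r}_{\subiplusk[0:N]}}$. The candidate construction reuses the optimal input sequence of the old (shifted) tracking problem; since the dynamics are independent of the reference, the resulting state trajectory is unchanged, so the cost gap is controlled purely through the quadratic stage cost, whose dependence on the reference is Lipschitz over the compact sets $\X_i,\U_i$ with a constant $\LipV$ determined by $Q_i,R_i$ and the diameters of these sets. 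Combining both stages and substituting the coupling constraint \eqref{eq:periodic_coupling_constraint_derivation}, i.e. $\mynorm{\opt{r}_{\subi[K:K+N]}-\opt{r}_{\subiplusk[0:N]}} \leq \C(\VNit) = \tfrac{1}{\LipV}\bigl(V_{\max,i}-(1-\tfrac{\alpha_{N,i}}{\gammabari})^K\VNit\bigr)$, the two terms telescope to exactly $V_{\max,i}$, which is \eqref{eq:ref_v_max}.

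The main obstacle is the feasibility step hidden in the second stage: the reused candidate input was constructed to satisfy the old partition constraint $p_{i,k|t+K}\in\Wbar_{i,k|t+K}$ associated with $\mathbb{W}_{i,\cdot+K|t}$, yet it must remain admissible under the updated partition sequence $\mathbb{W}_{i,\subplusk}$. Here I would invoke the reachability of $\opt{r}_{\subiplusk}$ (Def.~\ref{def:reachable_reference}) together with Lemma~\ref{lemma:feasible_partition_update} and the consensus test \eqref{eq:partition_update_condition} in Algorithm~\ref{alg:decoupled_coverage_control}, which are precisely designed to certify that the candidate trajectory stays inside the updated Voronoi partitions. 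Discharging this feasibility carefully, and consolidating all reference-dependent cost terms into a single constant $\LipV$ that is uniform in the horizon $N_i$, is the delicate part of the argument.
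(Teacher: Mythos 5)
Your two main stages are exactly the paper's proof: (i) the $K$-step exponential decay of the tracking cost, which the paper obtains from Theorem~\ref{theorem:stability_of_tracking_mpc} precisely by the contraction argument you spell out (combining \eqref{eq:smth} with \eqref{eq:lyapunov_decrease_standard_tracking}), and (ii) a Lipschitz bound on $\VNi$ with respect to the reference, obtained by reusing the optimizer for the old (shifted) reference as a feasible candidate for the new one and exploiting that the quadratic cost is Lipschitz on the compact sets $\X_i,\U_i$; substituting \eqref{eq:periodic_coupling_constraint_derivation} then telescopes to $V_{\max,i}$, as in the paper's chain \eqref{eq:derivation_coupling_constraint}.

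However, the ``main obstacle'' you identify is not part of this proposition, and the way you propose to discharge it would actually break the proof. In Proposition~\ref{proposition:coupling_constraint} the partition sequence appearing in \eqref{eq:ref_v_max} is the \emph{shifted} sequence $\W_{i,\cdot+K|t}$, not an updated Voronoi sequence: the reference planner \eqref{eq:periodic_reference_planner} is solved over the shifted partitions, and the case where the partitions are replaced by $\W_{\opt{p}_{\cdot|t+K}}$ is handled separately by Proposition~\ref{proposition:V_max_reference}. Consequently, both value functions in the Lipschitz step share the \emph{same} partition sequence, and since none of the constraints of \eqref{eq:tracking_feasible_trajectory} (dynamics, state, input, partition) depend on the reference, the old optimizer is feasible verbatim for the new reference---no feasibility argument beyond this observation is needed. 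Invoking the consensus test \eqref{eq:partition_update_condition} and Lemma~\ref{lemma:feasible_partition_update} here is not only unnecessary but unavailable: Proposition~\ref{proposition:coupling_constraint} does not assume \eqref{eq:partition_update_condition} holds (it must cover exactly the case where the update test \emph{fails} and partitions are merely shifted), so a proof relying on it would be proving a different, weaker statement. Drop that paragraph and your argument coincides with the paper's.
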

% we derive an additional \textit{coupling} constraint to \eqref{eq:planner}
\begin{proof}
Given that $\X, \U$ are compact (Ass. \ref{assumption1}), and $\ell_i$ is quadratic as shown in \eqref{eq:bounds_on_stage_cost}, the cost $\JNi$ is Lipschitz continuous in $\rti$ with constant $\LipV > 0$. Given that the feasibility of \eqref{eq:tracking_feasible_trajectory} is independent of $\rti$, $\VNi$ is also Lipschitz with $\LipV$, i.e., for any two references $\rti, \hat{r}_{i, \sub}$ it holds that
%\begin{align}
%\label{eq:lipschitz_of_tracking_cost_wrt_ref}
%\begin{split}
%    | \VNi&(x_{i, t}, \rti, \mathbb{W}_{i,\sub}) - \VNi(x_{i, t}, \hat{r}_{i, \sub}, \mathbb{W}_{i,\sub}) | \\
%    &\leq | \JNi(\xti, \uti, \rti) - \JNi(\xti, \uti, \hat{r}_{i, \sub}) | \\
%    &\leq \LipV \mynorm{\rti[0:N_i] - \hat{r}_{i, \sub[0:N_i]}},
%\end{split}
%\end{align}
%
\begin{align}
\label{eq:lipschitz_of_tracking_cost_wrt_ref}
\begin{split}
\VNi&(x_{i, t}, \hat{r}_{i, \sub}, \mathbb{W}_{i,\sub}) \leq \JNi(\Tilde{x}_{i,\cdot|t}, \Tilde{u}_{i,\cdot|t}, \hat{r}_{i, \sub})\\
&\leq \JNi(\Tilde{x}_{i,\cdot|t}, \Tilde{u}_{i,\cdot|t}, r_{i, \sub}) + \LipV \mynorm{\rti[0:N_i] - \hat{r}_{i, \sub[0:N_i]}}\\
&= \VNi(x_{i, t}, \rti, \mathbb{W}_{i,\sub}) + \LipV \mynorm{\rti[0:N_i] - \hat{r}_{i, \sub[0:N_i]}},
\end{split}
\end{align}
where $(\Tilde{x}_{i,\cdot,t}, \Tilde{u}_{i,\cdot,t})$ is the optimal solution to Problem~\eqref{eq:tracking_feasible_trajectory} for the reference $\rti$. Given this reference, Theorem~\ref{theorem:stability_of_tracking_mpc} ensures the tracking cost decreases exponentially \eqref{eq:lyapunov_decrease_standard_tracking}, i.e., for any time $t+\delta$ with $\delta \in \mathbb{N}$ it holds that
%Given Theorem \ref{theorem:stability_of_tracking_mpc}, the tracking cost shows an exponential decrease over time for a given reference, hence \eqref{eq:lyapunov_decrease_standard_tracking} with Assumption~\ref{assumption:local_controllability} implies that the tracking cost at some time $t+\delta$ with $\delta \geq 0$ can be upper bounded by 
\begin{align}
\begin{split}
\label{eq:fast_coverage_exp_cost_decrease_K_steps}
    \VNi(x_{i, t+\delta},& \opt{r}_{i,\cdot + \delta|t}, \W_{i,\cdot + \delta|t}) \\
    &\leq \left( 1 - \frac{\alpha_{N,i}}{\gammabari} \right)^\delta \VNi(x_{i, t}, \opt{r}_{i,\cdot |t},\W_{i,\cdot|t}).
\end{split}
\end{align}
Finally, \eqref{eq:ref_v_max} follows by combining the previous equations:
\begin{align}
\label{eq:derivation_coupling_constraint}
\begin{split}
     & \quad \VNi(x_{i, t+K}, \opt{r}_{\subiplusk}, \mathbb{W}_{i, \subplusk}) \\
     & \quad \stackrel{\eqref{eq:lipschitz_of_tracking_cost_wrt_ref}}{\leq} \VNi(x_{i, t+K}, \opt{r}_{i,\cdot + K|t}, \W_{i,\cdot + K|t}) \\
     &\quad \quad + \LipV \mynorm{r_{\subi[K:K+N]} - r_{\subiplusk[0:N]}}\\
     & \quad \stackrel{\eqref{eq:fast_coverage_exp_cost_decrease_K_steps}}{\leq} \left( 1 - \frac{\alpha_{N,i}}{\gammabari} \right)^K \VNi(x_{i, t}, \opt{r}_{i,\cdot | t},\W_{i,\cdot|t}) \\
     &\quad \quad + \LipV \mynorm{\opt{r}_{\subi[K:K+N]} - \opt{r}_{\subiplusk[0:N]}}
     \stackrel{\eqref{eq:ref_v_max}}{\leq} V_{\max, i},
\end{split}
\end{align}
where the last inequality is enforced.
\end{proof}
This proposition ensures that any reference with the \textit{coupling} constraint~\eqref{eq:periodic_coupling_constraint_derivation} lies within the region of attraction of the tracking MPC. 
Hence, we enforce inequality~\eqref{eq:periodic_coupling_constraint_derivation} directly in the computation of a new reference $\opt{r}_{\subiplusk}$:
\begin{subequations} \label{eq:periodic_reference_planner}
\begin{align} 
    %\opt{r}_{\subiplusk}(\rti, \mathbb{W}_{i,\sub}, \VNit) \coloneqq \\
    \opt{r}_{\subiplusk} &\coloneqq
    \argmin_{r_{\subiplusk}} \quad H_i^T(p_{\subiplusk}, \W_{i,\sub[\cdot + K]}, t+K) \\
    \textrm{s.t.} &\quad \eqref{eq:planner1}, \eqref{eq:planner2}, \eqref{eq:planner3}, \eqref{eq:planner4}, \eqref{eq:planner5},\\
     &\quad \mynorm{\opt{r}_{\subi[K:K+N]} - \opt{r}_{\subiplusk[0:N]}} \leq \C(\VNit).\label{eq:periodic_coupling_constraint}
\end{align}
\end{subequations}

%To ensure convergence of the algorithm, $r_{i,\cdot | t + K} = \opt{r}_{i,\cdot + K|t}$ is used if the coverage cost is not strictly lower than that of the previous solution.

Similar to the approach in Problem~\eqref{eq:planner}, we select $\opt{r}_{i,\cdot | t + K} = \opt{r}_{i,\cdot + K|t}$ if the optimal coverage cost is not strictly lower than that of the previous optimal solution.

\subsection{Partition Update Verification}
\label{subsection:partition_update}
Once a new collection of reference trajectories is determined for each agent, we proceed to update the Voronoi partition sequence $\mathbb{W}_{\cdot|t+K}$, which will be utilized in the subsequent iteration at time $t + K$. However, it must be verified that a partition update is feasible for the tracking MPC, given its independence from the planner. This verification is important since the closed-loop trajectories from the tracker generally differ from the planned ones.

Thus, we consider the candidate partition sequence $\hat{\mathbb{W}}_{\cdot|t+K} := \mathbb{W}_{\opt{p}_{\cdot|t+K}}$ computed with (\ref{eq:voronoi_partition}), and define the following candidate trajectory, constructed from the optimal trajectory calculated by the tracking MPC at time $t$
\begin{equation}\label{eq:fast_coverage_candidate_sequence}
\begin{aligned}
    \hat{u}_{i, \cdot | t}   &= [ \optt{\utim[1]},\dots,\optt{\utim[N_i-2]},\: \opt{\utim[N_i-1]} , \opt{\utim[N_i]}], \\
    \hat{x}_{i, \cdot | t}   &= [ \optt{\xtim[1]},\dots, \optt{\xtim[N_i-2]} , \\
    & \quad \optt{\xtim[N_i-1]}, \hat{x}_{i, N_i-1 | t} , \hat{x}_{i, N_i | t}],
\end{aligned}
\end{equation}
where the last elements in the sequence are computed with the reference inputs $\opt{u}$ and we define $\hat{x}_{i, N_i-1 | t}, \hat{x}_{i, N_i | t}$ as 
\begin{align*}
\hat{x}_{i, N_i-1 | t} &= f(\optt{\xtim[N_i-1]}, \opt{\utim[N_i-1]}),\\
\hat{x}_{i, N_i | t} &= f(\hat{x}_{i, N_i-1 | t}, \opt{\utim[N_i]}). 
\end{align*}
Then, we verify whether the candidate reference trajectory of each agent fulfills the following conditions at time $t+K$
\begin{equation} \label{eq:partition_update_condition}
\begin{aligned}
     &\hat{x}_{\subiplusk[k]} \in \mathbb{X}_i, \\
     &C\hat{x}_{\subiplusk[k]}=:\hat{p}_{\subiplusk[k]} \in \hat{\Wbar}_{i,\subplusk[k]},\\
     &\forall k \in \{ 0, \dots, N_i \}.
\end{aligned}
\end{equation}

If the candidate sequences \eqref{eq:fast_coverage_candidate_sequence} fulfill the update conditions~\eqref{eq:partition_update_condition} for all agents, the new partition sequence is applied: $\mathbb{W}_{\cdot|t+K} = \hat{\mathbb{W}}_{\subplusk}$. Otherwise, the previously computed shifted partition sequence is applied, i.e., $\W_{\cdot|t+K} =\W_{\cdot+K|t}$.

\subsection{Theoretical Analysis}
In the following, we provide formal guarantees of Algorithm~\ref{alg:decoupled_coverage_control}. To this end, we introduce the following proposition, which provides a bound on the tracking MPC cost after performing a partition update.

\begin{proposition} \label{proposition:V_max_reference} Let Assumption~\ref{assumption1} hold. Given a partition sequence $\Wsub$ and reachable references $\opt{r}_{\subplusk}$ computed with \eqref{eq:periodic_reference_planner},
    suppose that $\VNi(x_{i,t}, \rti, \Wsubi) \leq \Vmaxi$ for all agents $i \in \M$ and $N_i > \max\{\opt{N}_i, N_{0,i}\}$ with $\opt{N}_i$ from \eqref{eq:n_star_lower_bound}, and let the candidate $\hat{\W}_{\subplusk} := \W_{\opt{p}_{\subplusk}}$ be the corresponding Voronoi partition sequence based on $\opt{r}_{\subplusk}$. If condition \eqref{eq:partition_update_condition} is fulfilled at time $t+K$, then it holds that 
    \begin{equation} \label{eq:V_max_reference_in_new_partitions}
        \VNi(x_{i, t+K}, \opt{r}_{\subiplusk}, \hat{\W}_{\subiplusk}) \leq V_{\max, i}. 
    \end{equation}
\end{proposition}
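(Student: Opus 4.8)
The plan is to use that the optimal value $\VNi(\cdot,\cdot,\W)$ is the minimum of the (partition-independent) cost $\JNi$ over all trajectories feasible for \eqref{eq:tracking_feasible_trajectory}, where the partition enters only through the tightened constraint \eqref{mpc_partition_constraints}. Hence, to bound $\VNi(x_{i,t+K},\opt{r}_{\subiplusk},\hat{\W}_{\subiplusk})$ it suffices to exhibit one trajectory that is feasible in the updated partition $\hat{\W}_{\subiplusk}$ and whose cost is at most $V_{\max,i}$. The natural choice is the shifted-and-extended candidate \eqref{eq:fast_coverage_candidate_sequence}, and the verification condition \eqref{eq:partition_update_condition} is precisely what certifies its feasibility in the new partition.

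First I would establish feasibility and the resulting upper bound. The leading stages of the candidate are taken from the optimal tracking prediction at time $t+K$ and therefore satisfy \eqref{mpc_state_and_input_constraints}, while the appended stages are generated by the reachable reference inputs, which lie in $\U_i$. Membership of every candidate state in $\X_i$ and of every candidate position in the tightened new partition is exactly what \eqref{eq:partition_update_condition} asserts for $k\in\{0,\dots,N_i\}$. Thus the candidate is feasible for \eqref{eq:tracking_feasible_trajectory} with partition $\hat{\W}_{\subiplusk}$, giving
\begin{equation*}
\VNi(x_{i,t+K},\opt{r}_{\subiplusk},\hat{\W}_{\subiplusk}) \leq \JNi(\hat{x}_{\subiplusk},\hat{u}_{\subiplusk},\opt{r}_{\subiplusk}).
\end{equation*}
Since $\JNi$ is independent of the partition, I would rewrite this candidate cost by the standard shifted-candidate identity: it equals the optimal tracking cost at $t+K$ with the old, shifted partition, $\VNi(x_{i,t+K},\opt{r}_{\subiplusk},\W_{i,\cdot+K|t})$, minus the first and last stage costs of that prediction, plus the costs of the two reference-driven stages appended to its tail. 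Proposition~\ref{proposition:coupling_constraint} applies because the references $\opt{r}_{\subiplusk}$ are computed with the coupling constraint \eqref{eq:periodic_coupling_constraint} in \eqref{eq:periodic_reference_planner} and $\VNi(x_{i,t},\opt{r}_{i,\cdot|t},\W_{i,\cdot|t})\leq V_{\max,i}$ by hypothesis; hence \eqref{eq:ref_v_max} yields $\VNi(x_{i,t+K},\opt{r}_{\subiplusk},\W_{i,\cdot+K|t})\leq V_{\max,i}$, and it remains to show that the appended stage costs do not exceed the dropped stage costs.

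Bounding the extension cost is the core of the argument. Each appended state is obtained by applying a reference input to the corresponding terminal state of the open-loop optimal prediction, so by Lipschitz continuity of $f_i$ (Assumption~\ref{assumption1}) its deviation from the reference is controlled by the terminal prediction error, and with \eqref{eq:bounds_on_stage_cost} the appended stage cost is bounded by a constant multiple of the last (terminal) dropped stage cost. The decisive ingredient is then a turnpike/exponential estimate showing that this terminal stage cost decays geometrically in the horizon length relative to the initial error $\mynorm{x_{i,t+K}-x_{i,0|t+K}^r}^2_{Q_i}$; such an estimate follows from the cost-controllability and exponential stabilizability underlying Theorem~\ref{theorem:stability_of_tracking_mpc} (cf.~\cite{koehlerECO,grune2017nonlinear}). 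Because the first dropped stage cost is lower bounded by $\alpha_{1,i}\mynorm{x_{i,t+K}-x_{i,0|t+K}^r}^2$, both the appended cost and the dropped cost scale with the same initial error, so the comparison reduces to a geometric inequality in $N_i$ that holds for all $N_i$ above a \emph{state-independent} threshold; this threshold is exactly $\opt{N}_i$ from \eqref{eq:n_star_lower_bound}. For $N_i>\max\{\opt{N}_i,N_{0,i}\}$ the appended cost is therefore absorbed by the dropped stage costs, the candidate cost is at most $V_{\max,i}$, and \eqref{eq:V_max_reference_in_new_partitions} follows.

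I expect the main obstacle to be making the turnpike bound on the open-loop optimal prediction rigorous and, in particular, verifying that the resulting horizon threshold $\opt{N}_i$ is independent of the current state $x_{i,t+K}$ and of the specific reference, so that it can be fixed a priori in Algorithm~\ref{alg:decoupled_coverage_control}. A secondary technical point is the careful bookkeeping of which stages of the optimal prediction are dropped and which reference-driven stages are appended in \eqref{eq:fast_coverage_candidate_sequence}, since the sign of the net cost change hinges on matching the dropped terminal stage against the appended stages through the Lipschitz constant of $f_i$.
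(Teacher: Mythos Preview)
Your overall strategy—use the candidate \eqref{eq:fast_coverage_candidate_sequence} as a feasible point in the new partition, bound its cost, and control the appended stages via Lipschitz continuity plus a turnpike estimate—is the right one and matches the paper. However, the central ``shifted-candidate identity'' you invoke is incorrect, and this breaks the argument as written. The candidate $(\hat{x}_{i,\cdot|t+K},\hat{u}_{i,\cdot|t+K})$ is \emph{not} built from the optimal tracking prediction at time $t+K$; by \eqref{eq:fast_coverage_candidate_sequence} it is the optimal prediction from time $t+K-1$, shifted by one step, with reference inputs appended. Hence its cost decomposes against $\VNi(x_{i,t+K-1},\opt{r}_{i,\cdot+K-1|t},\W_{i,\cdot+K-1|t})$—the value function one step earlier with the \emph{old} reference—and not against $\VNi(x_{i,t+K},\opt{r}_{i,\cdot|t+K},\W_{i,\cdot+K|t})$. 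There is no identity of the form ``candidate cost $=\VNi$ at $t+K$ minus dropped stages plus appended stages''; the only relation at $t+K$ is the trivial inequality $\VNi(t+K,\W^{\mathrm{old}})\leq \JNi(\hat{x},\hat{u},\cdot)$, which goes the wrong way for your purposes. Consequently, bounding $\VNi(x_{i,t+K},\opt{r}_{i,\cdot|t+K},\W_{i,\cdot+K|t})$ via Proposition~\ref{proposition:coupling_constraint} does not help you control the candidate cost.

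The paper fixes this by working at $t+K-1$: it shows $\JNi(\hat{x},\hat{u},\opt{r}_{i,\cdot+K|t})\leq \VNi(x_{i,t+K-1},\opt{r}_{i,\cdot+K-1|t},\W_{i,\cdot+K-1|t})-\ell_0+\ell_{\mathrm{app}}$, bounds $\ell_{\mathrm{app}}$ against $\ell_0$ via the turnpike estimate of \cite[Eq.~(20)--(21)]{koehlerECO} to obtain a multiplicative factor $(1+\rho_{N,i})$, and then proves $(1+\rho_{N,i})\leq 1-\alpha_{N,i}/\bar{\gamma}_i$ precisely when $N_i\geq N_i^\star$ in \eqref{eq:n_star_lower_bound}. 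The remaining $K-1$ steps from $t$ to $t+K-1$ are handled by the standard exponential decrease \eqref{eq:fast_coverage_exp_cost_decrease_K_steps}, yielding the overall factor $(1-\alpha_{N,i}/\bar{\gamma}_i)^K$ applied to $\VNi(x_{i,t},\cdot)\leq V_{\max,i}$. Note also that the appended inputs in \eqref{eq:fast_coverage_candidate_sequence} come from the old reference $\opt{r}_{i,\cdot|t}$, so the natural cost evaluation is against $\opt{r}_{i,\cdot+K|t}$, not the new $\opt{r}_{i,\cdot|t+K}$; Proposition~\ref{proposition:coupling_constraint} is not used in this proof. Your turnpike and horizon-threshold reasoning is sound, but it must be anchored at $t+K-1$, not $t+K$.
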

The proof can be found in Appendix~\ref{appendix:proof_V_max_reference}.
Further, the update conditions in \eqref{eq:partition_update_condition} are guaranteed to be fulfilled in finite time given the following proposition. 
\begin{proposition} \label{proposition:finite_time_partition_update}
%\textcolor{blue}{There exists a uniform bound $\tau$, such that for any $t\in\mathbb{N}$, there exists $t'\in[t,t+\tau]$, such that the reference is updated $r_{\cdot+1|t'}\neq r_{\cdot|t'+1}$ or the candidate trajectory $\hat{x}_{i, \cdot | t' + 1}$~\eqref{eq:fast_coverage_candidate_sequence} fulfill the conditions~\eqref{eq:partition_update_condition} at $t'$.}

     %There exists a uniform bound $\tau$ for any partition sequence $\Wsub$ and reachable references $\rt$ that are fixed, i.e. $r_{\cdot|t+k}=r_{\cdot+k|t} \, \forall k\in \mathbb{N}$, with $\VNi(x_{i,t}, \rti, \Wsubi) \leq \Vmaxi \, \forall i \in \M$, such that the candidate trajectories $\hat{x}_{i, \cdot | t' + 1}$~\eqref{eq:fast_coverage_candidate_sequence} constructed at time $t' \in \left[t, t + \tau \right]$ fulfill the conditions~\eqref{eq:partition_update_condition}.

    % There exists a uniform bound $\tau$, such that for any $t\in\mathbb{N}$, there exists a time $t'\in[t,t+\tau]$ at which the candidate trajectories $\hat{x}_{i, \cdot | t' + 1}$~\eqref{eq:fast_coverage_candidate_sequence} fulfill the conditions~\eqref{eq:partition_update_condition}, given that the reference is fixed, i.e. $r_{\cdot+k|t} = r_{\cdot|t + k} \, \forall k \in \mathbb{N}$, and $\VNi(x_{i,t}, \rti, \Wsubi) \leq \Vmaxi \, \forall i \in \M$.
     
Let Assumption~\ref{assumption1} and~\ref{assumption:local_controllability} hold. Suppose that after some $t\in\mathbb{N}$, the  
reference trajectories are reachable and fixed, i.e., $r_{i,\cdot|t + k} = r_{i,\cdot+k|t}$ and $\VNi(x_{i,t}, \rti, \Wsubi) \leq \Vmaxi \, \forall i \in \M$ $\forall k \in [0,\tau]$ for some uniform bound $\tau>0$. Then, there exists a time $t'\in[t,t+\tau]$ at which the candidate trajectories $\hat{x}_{i, \cdot | t' + 1}$~\eqref{eq:fast_coverage_candidate_sequence} satisfy the conditions~\eqref{eq:partition_update_condition}.
\end{proposition}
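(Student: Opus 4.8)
The plan is to exploit the exponential convergence of the tracking MPC once the reference is frozen: the closed-loop state, its MPC prediction, and hence the tracker-derived candidate \eqref{eq:fast_coverage_candidate_sequence} all converge to the (shifted) reference, which by Definition~\ref{def:reachable_reference} satisfies the \emph{tightened} constraints $\X_i^{\intr}=\X_i\ominus\mathbb{B}_\epsilon^{n_i}$ and $\Wbar_i^{\intr}$, i.e.\ it keeps an $\epsilon$-margin from the non-tightened sets appearing in \eqref{eq:partition_update_condition}. The strategy is therefore to show that the candidate gets within $\epsilon$ of this reference in finite time, so that this slack is enough to recover the non-tightened conditions \eqref{eq:partition_update_condition}. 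Throughout I abbreviate $\VNi(x_{i,s}) := \VNi(x_{i,s},r_{i,\cdot|s},\W_{i,\cdot|s})$.

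First I would apply Theorem~\ref{theorem:stability_of_tracking_mpc}. Since the references are reachable, satisfy the shift identity $r_{i,\cdot|t+k}=r_{i,\cdot+k|t}$, and $\VNi(x_{i,t})\leq\Vmaxi$, the value function decays geometrically, exactly as in \eqref{eq:fast_coverage_exp_cost_decrease_K_steps}, so that $\VNi(x_{i,s})\leq\big(1-\tfrac{\alpha_{N,i}}{\gammabari}\big)^{s-t}\Vmaxi$ for all $s\in[t,t+\tau]$ (no partition update is triggered before the claimed time, so the partition is merely shifted). Because the value function is a sum of nonnegative stage costs, each lower bounded by $\alpha_{1,i}\mynorm{\cdot}^2$ through \eqref{eq:bounds_on_stage_cost}, every predicted state obeys $\mynorm{\optt{x}_{i,k|s}-x^r_{i,k|s}}\leq\alpha_{1,i}^{-1/2}\sqrt{\VNi(x_{i,s})}$ for $k=0,\dots,N_i-1$.

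Next I would propagate this to the candidate \eqref{eq:fast_coverage_candidate_sequence}, comparing it against the shifted reference $x^r_{i,\cdot|s+1}=x^r_{i,\cdot+1|s}$. For $k\leq N_i-2$ the candidate equals the shifted prediction $\optt{x}_{i,k+1|s}$ and inherits the bound above. The two appended tail states are propagated with the reference inputs $\opt{u}_{i,N_i-1|s},\opt{u}_{i,N_i|s}$, so by Lipschitz continuity of $f_i$ (Assumption~\ref{assumption1}) they deviate from $x^r_{i,N_i-1|s+1}$ and $x^r_{i,N_i|s+1}$ by at most $\mathcal{L}_{f_i}$ and $\mathcal{L}_{f_i}^2$ times the prediction error. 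Hence there is a constant $C_i>0$ with $\mynorm{\hat{x}_{i,k|s}-x^r_{i,k|s+1}}\leq C_i\sqrt{\VNi(x_{i,s})}$ for all $k\in\{0,\dots,N_i\}$. Crucially, the positions $\opt{p}_{i,\cdot|s+1}$ of this shifted reference are, at matching indices, exactly the generators of the candidate Voronoi partition $\hat{\W}=\W_{\opt{p}}$.

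The geometric step then closes the argument. From $p^r_i\in\Wbar_i^{\intr}=\W_i\ominus\mathbb{B}_{R_{\max}+\epsilon}^{d}$ and disjointness of the cells, the reference positions satisfy $\mynorm{\opt{p}_i-\opt{p}_j}\geq 2(R_{\max}+\epsilon)$, so they qualify as generators in Lemma~\ref{lemma:feasible_partition_update}: every point within $\epsilon$ of $\opt{p}_{i,k|s+1}$ lies in the tightened candidate cell $\hat{\Wbar}_{i,k}$. Combining this with $x^r_{i,k|s+1}\in\X_i^{\intr}$, whenever $\max\{1,\mynorm{C}\}\,C_i\sqrt{\VNi(x_{i,s})}\leq\epsilon$ we obtain $\hat{x}_{i,k|s}\in\X_i$ and $\hat{p}_{i,k|s}\in\hat{\Wbar}_{i,k}$, which is exactly \eqref{eq:partition_update_condition}. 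Since the bound decays geometrically, it drops below $\epsilon$ after a finite, problem-data-dependent number of steps $\Delta_i$; choosing the uniform window $\tau\geq\max_{i\in\M}\Delta_i$ yields a time $t'\in[t,t+\tau]$, with $s=t'+1$, at which \eqref{eq:partition_update_condition} holds for all agents. I expect the main obstacle to be the geometric reconciliation: one must compare the candidate not to $r_{i,\cdot|s}$ but to the shifted reference $r_{i,\cdot|s+1}$, whose positions are precisely the generators of the candidate partition, so that Lemma~\ref{lemma:feasible_partition_update} applies; by contrast, the Lipschitz bookkeeping for the two appended tail states $k=N_i-1,N_i$, which the value function does not bound directly, is routine.
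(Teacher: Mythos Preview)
Your proposal is correct and follows essentially the same approach as the paper's proof in Appendix~\ref{appendix:close_enough}: both combine the exponential decay of $\VNi$ from Theorem~\ref{theorem:stability_of_tracking_mpc}, the stage-cost lower bound \eqref{eq:bounds_on_stage_cost} to control the predicted-state error, Lipschitz propagation for the two appended tail states, and Lemma~\ref{lemma:feasible_partition_update} to pass from $\epsilon$-closeness to membership in the candidate tightened cells. The only differences are cosmetic: the paper frames the argument as a contradiction and works out the explicit constants $V_\epsilon=\alpha_{1,i}\epsilon^2/((1+\beta_i)\|C_i\|^2)$ and the corresponding $\tau$, whereas you argue directly and leave the constants implicit; you are also slightly more explicit than the paper in deriving the separation $\mynorm{\opt{p}_i-\opt{p}_j}\geq 2(R_{\max}+\epsilon)$ needed to invoke Lemma~\ref{lemma:feasible_partition_update} and in handling the $\X_i$ part of \eqref{eq:partition_update_condition}.
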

The proof of Proposition~\ref{proposition:finite_time_partition_update} can be found in Appendix~\ref{appendix:close_enough}. 
Subsequently, we can summarize the theoretical guarantees of Algorithm~\ref{alg:decoupled_coverage_control} in
the following theorem.
\begin{theorem} \label{theorem:decoupled_coverage_control}
    Let Assumptions~\ref{assumption1},~\ref{assumption:fintie_centroidal_periodic_configuration},  and~\ref{assumption:local_controllability} hold, and consider the initialization in Algorithm~\ref{alg:decoupled_coverage_control}. Then, for a horizon $N > \max \{N_{0,i}, \opt{N}\}$, it holds that

        \textit{I)} (Recursive feasibility) All the optimization problems in  Algorithm~\ref{alg:decoupled_coverage_control} are feasible for all $t\in\mathbb{N}$.
        
        \textit{II)} (Constraint satisfaction) The resulting closed-loop trajectories from Algorithm~\ref{alg:decoupled_coverage_control} satisfy the state and input constraints~\eqref{mpc_state_and_input_constraints} and ensure collision avoidance~\eqref{collision_avoidance}.
                
        \textit{III)} The update condition \eqref{eq:partition_update_condition} is fulfilled after a finite number of time steps.
        
        \textit{IV)} The closed-loop trajectories converge asymptotically to an optimal periodic configuration (Def. \ref{def:optimal_periodic_configuration}).
\end{theorem}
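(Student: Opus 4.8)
The plan is to derive the four claims from a single invariant that chains together the cost bounds of Propositions~\ref{proposition:coupling_constraint} and~\ref{proposition:V_max_reference}, the stability of the tracker (Theorem~\ref{theorem:stability_of_tracking_mpc}), and the descent argument of Proposition~\ref{proposition:planner_convergence}; the horizon condition $N>\max\{N_{0,i},\opt{N}\}$ is exactly what makes both Theorem~\ref{theorem:stability_of_tracking_mpc} and Proposition~\ref{proposition:V_max_reference} applicable. The backbone is the claim that
\[ \bar V_{i,t}:=\VNi(x_{i,t},\opt{r}_{i,\cdot|t},\W_{i,\cdot|t})\le V_{\max,i},\qquad\forall\,i\in\M,\ t\in\mathbb{N}, \]
which I would prove by induction over the planning instants. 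The base case follows from the steady-state initialization: the constant reference has zero tracking cost, and the coupling constraint~\eqref{eq:periodic_coupling_constraint} enforced in the initial call of~\eqref{eq:periodic_reference_planner} keeps $\bar V_{i,0}\le V_{\max,i}$. Between two planning instants the reference and partition are only shifted, so the Lyapunov decrease~\eqref{eq:lyapunov_decrease_standard_tracking} of Theorem~\ref{theorem:stability_of_tracking_mpc} gives $\bar V_{i,t+1}\le\bar V_{i,t}$ and preserves the bound. At a planning instant the newly activated reference satisfies~\eqref{eq:periodic_coupling_constraint}, and the partition is either updated to $\W_{\opt{p}_{\cdot|t+K}}$ upon consensus---in which case Proposition~\ref{proposition:V_max_reference} gives $\bar V_{i,t+K}\le V_{\max,i}$---or retained as the shifted sequence, in which case Proposition~\ref{proposition:coupling_constraint} gives the same bound. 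Since $\bar V_{i,t}\le V_{\max,i}$ forces $\C(\bar V_{i,t})\ge 0$, the shifted reference is always a feasible candidate of~\eqref{eq:periodic_reference_planner} (it makes the left-hand side of~\eqref{eq:periodic_coupling_constraint} vanish), so the planner is feasible; combined with $\bar V_{i,t}\le V_{\max,i}$ and Theorem~\ref{theorem:stability_of_tracking_mpc}, the tracking MPC~\eqref{eq:tracking_feasible_trajectory} is feasible as well, which is claim~I).

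Claim~II) is immediate once feasibility holds. The constraints~\eqref{mpc_state_and_input_constraints} are imposed directly in~\eqref{eq:tracking_feasible_trajectory}, so the applied input and the resulting successor state~\eqref{eq:closed_loop_standard_tracking} satisfy them. For collision avoidance I would use that, by the consensus step, at every time $t$ all agents share one Voronoi tessellation $\{\W_{i,0|t}\}_{i\in\M}$ of $\mathbb{A}$ and the tracker enforces $p_{i,t}\in\Wbar_{i,0|t}=\W_{i,0|t}\ominus\mathbb{B}_{R_{\max}}^2$. Two distinct Voronoi cells are separated by their bisector hyperplane; since $p_{i,t}$ and $p_{j,t}$ lie on opposite sides, each at distance at least $R_{\max}$ from that hyperplane, projecting $p_{i,t}-p_{j,t}$ onto the bisector normal yields $\mynorm{p_{i,t}-p_{j,t}}\ge 2R_{\max}$, which is~\eqref{collision_avoidance}.

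For claims~III) and~IV) I would transport the descent argument of Proposition~\ref{proposition:planner_convergence} to the closed loop. The cumulative coverage cost $H^T(\opt{p}_{\cdot|t},\W_{\cdot|t},t)$ is non-increasing along the planner iterations: the shifted reference is always admissible in~\eqref{eq:periodic_reference_planner}, so by shift invariance re-planning never increases the cost, and every accepted partition update to $\W_{\opt{p}_{\cdot|t}}$ can only decrease it by Voronoi optimality, exactly as in~\eqref{eq:coverage_descent_update_partition}--\eqref{eq:coverage_descent_minimizer}. To keep this descent from stalling, observe that whenever re-planning ceases to strictly decrease the cost the reference is held at its periodic shift $\opt{r}_{i,\cdot|t+K}=\opt{r}_{i,\cdot+K|t}$, i.e.\ it is fixed in the periodic sense; Proposition~\ref{proposition:finite_time_partition_update} then guarantees that the update condition~\eqref{eq:partition_update_condition} is satisfied within a finite window, which is claim~III) and triggers the next partition update. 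Pairing the monotone, lower-bounded cost with the finiteness of optimal periodic configurations (Assumption~\ref{assumption:fintie_centroidal_periodic_configuration}) and the LaSalle-type argument of~\cite[Lemma~1.3, Prop.~1.4]{Corts2002CoverageCF}, the reference configuration $(\opt{r}_{\cdot|t},\W_{\cdot|t})$ converges to a fixed point that is simultaneously a minimizer of~\eqref{eq:periodic_reference_planner} and the Voronoi tessellation~\eqref{eq:voronoi_partition} of its own generators, i.e.\ an optimal periodic configuration (Def.~\ref{def:optimal_periodic_configuration}). Once the reference has settled, the exponential convergence of the tracker implied by~\eqref{eq:lyapunov_decrease_standard_tracking} drives $x_{i,t}$ to this reference, so the closed-loop trajectories converge to the optimal periodic configuration, establishing claim~IV).

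The main obstacle I anticipate is the coupling between claims~III) and~IV): the descent needs partition updates to occur, yet Proposition~\ref{proposition:finite_time_partition_update} only fires once the references have stopped changing, while the references stop changing only as a consequence of convergence. Breaking this circularity rigorously is the delicate part---one must argue that the monotone, bounded cost forces the re-planning increments to vanish so that the references become periodically fixed over a window long enough to invoke Proposition~\ref{proposition:finite_time_partition_update}, and that only finitely many cost-decreasing partition updates can occur before one of the finitely many configurations of Assumption~\ref{assumption:fintie_centroidal_periodic_configuration} is reached. A secondary subtlety is that the descent acts on the planned references, whereas collision avoidance and convergence must hold for the physically realized trajectory; bridging this gap relies on the tracker keeping $\bar V_{i,t}\le V_{\max,i}$ uniformly and on its exponential stability, which is why the invariant of the first paragraph is the linchpin of the whole argument.
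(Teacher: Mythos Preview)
Your proposal is correct and follows essentially the same route as the paper: the $\bar V_{i,t}\le V_{\max,i}$ invariant chained through Propositions~\ref{proposition:coupling_constraint} and~\ref{proposition:V_max_reference} for Part~I, the tightened-partition argument for Part~II, and the descent/LaSalle machinery of Proposition~\ref{proposition:planner_convergence} for Parts~III--IV. The circularity you flag as the main obstacle is exactly the delicate point, and the paper resolves it the way you hint at---by contradiction: assume the partition is never updated, so it is merely shifted; then the planner~\eqref{eq:periodic_reference_planner} has fixed (shifted) constraints and a monotone bounded cost, forcing the reference to settle to its periodic shift; on this fixed reference Proposition~\ref{proposition:finite_time_partition_update} fires within $\tau$ steps, contradicting the assumption. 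One minor omission: for planner feasibility after a partition update you implicitly need Lemma~\ref{lemma:feasible_partition_update} to place the shifted reference inside the \emph{new} Voronoi cells, which the paper invokes explicitly.
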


\begin{proof}\text{ }\\
    \textit{I)} The initial feasibility of the planner is ensured by the steady-state of all agents at $t=0$, which is given at initialization. Additionally, each planner iteration of \eqref{eq:periodic_reference_planner} remains recursively feasible with the candidate trajectory $r_{i,\cdot|t+K} = \opt{r}_{i,\cdot + K|t}$, which always lies within the updated partition sequence $\W_{\opt{p}_{\cdot + K|t}}$, as established by Lemma~\ref{lemma:feasible_partition_update}. 
    
    Similarly, the tracking MPC is feasible at $t=0$ since the vehicles start at a steady-state. Furthermore, the initial reference trajectory $\opt{r}_{\cdot|0}$ satisfies all constraints in \eqref{eq:tracking_feasible_trajectory}, and ensures that $\VNi(x_{i, 0}, \opt{r}_{i, \cdot|0}, \mathbb{W}_{i, \cdot|0}) \leq V_{\max, i}$. Every subsequent reference computation is reachable and adheres to \eqref{eq:periodic_coupling_constraint}, maintaining the bound $\VNi(x_{i, t}, \rti, \Wsubi) \leq V_{\max, i}$ as of Proposition~\ref{proposition:coupling_constraint}, ensuring that the agents stay within the region of attraction of the tracking MPC. Condition \eqref{eq:partition_update_condition} ensures that the partition update is only performed if feasible, and Proposition \ref{proposition:V_max_reference} guarantees the upper bound $\VNi(x_{i, t+K}, \opt{r}_{\subiplusk}, \W_{\opt{p}_{i,\subplusk}}) \leq V_{\max, i}$ on the tracking cost after any partition update. Therefore, the tracking MPC is recursively feasible by Theorem~\ref{theorem:stability_of_tracking_mpc}. 
    
    \textit{II)} The tracking MPC \eqref{eq:tracking_feasible_trajectory} enforces the state and input constraints~\eqref{mpc_state_and_input_constraints} at all times. Moreover, the constraints~\eqref{mpc_state_and_input_constraints} guarantee that each agent stays within its tightened partition sequence, thereby ensuring collision avoidance~\eqref{collision_avoidance}.

    \textit{III)} Suppose, for contradiction, that condition~\eqref{eq:partition_update_condition} is not satisfied $\forall t' \in \left [ t, \Tilde{t} \right]$ with some finite $t, \Tilde{t} \in \mathbb{R}_{+}$, and the partitions are not updated, i.e., they are only shifted at each time step with \eqref{eq:shifed_partition_periodic}.
    
    We first show that the references converge to a fixed trajectory. Since it holds that $\C(V_{i,t}) \geq \C(\Vmaxi) > 0$ with $V_{i,t} \leq \Vmaxi$ from \textit{Part I}, the planner cost is non-increasing with each iteration of \eqref{eq:periodic_reference_planner}, which is lower-bounded by zero. Further, by convention $r^\star_{i,\cdot | t + K} = \opt{r}_{i,\cdot + K|t}$ is used if the coverage cost remains constant with the minimizer. Given the compact constraints in \eqref{eq:periodic_reference_planner}, the sequence of reference updates $\opt{r}_{i, \cdot | t + n K}$ converges for some finite $n \in \mathbb{N}$. Hence, there exists a finite $\hat{t} \geq t$ at which the reference trajectories are not updated, i.e., $\opt{r}_{\cdot | \hat{t} + K} = \opt{r}_{\cdot + K | \hat{t}}$. Note that Problem~\eqref{eq:periodic_reference_planner} is equivalent to the planning Problem~\eqref{eq:planner} with an additional trust-region constraint on the maximal change of the reference~\eqref{eq:periodic_coupling_constraint}. Consequently, constraint \eqref{eq:periodic_coupling_constraint} becomes inactive at the convergence of the reference, and the solution of \eqref{eq:periodic_reference_planner} equivalently represents a local minimizer of \eqref{eq:planner}. For such fixed references, Proposition~\ref{proposition:finite_time_partition_update} guarantees a finite time $\Tilde{t} \in \left[ \hat{t}, \hat{t} + \tau \right]$ such that the closed-loop trajectories converge close enough (Appendix~\ref{appendix:close_enough}) to these references, fulfilling \eqref{eq:partition_update_condition}. This contradicts the initial assumption and hence condition \eqref{eq:partition_update_condition} is met at time $t' \in \left [ t, \Tilde{t} \right]$.

    \textit{IV)} This part follows the steps in the proof of Proposition~\ref{proposition:planner_convergence}. \textit{Part III} shows that the partition update condition \eqref{eq:partition_update_condition} is satisfied for some $t' \in \left [t,\hat{t} + \tau \right ]$. Therefore, the coverage cost using Algorithm~\ref{alg:decoupled_coverage_control} decreases with each iteration, forming a descent function:
    \begin{subequations}
    \begin{align}
    \begin{split} \label{eq:descent_3}
        & H^T(\opt{p}_{\cdot|t}, \W_{\opt{p}_{\cdot|t}}, t)\\
        &\geq H^T(\opt{p}_{\cdot|t+1}, \W_{\opt{p}_{\cdot + 1|t}}, t + 1)\\
        &\geq \dots \geq H^T(\opt{p}_{\cdot|t'}, \W_{\opt{p}_{\cdot + t' - t|t}}, t')\\
    \end{split}\\
    \begin{split} \label{eq:descent_4}
        &\geq H^T(\opt{p}_{\cdot|t'}, \W_{\opt{p}_{\cdot|t'}}, t')\\
        &\geq 0, \forall \, t \in \mathbb{N}.
    \end{split}
    \end{align}
    \end{subequations}
    The first inequalities~\eqref{eq:descent_3} hold from the cost decrease and convergence of the planner \eqref{eq:planner} from \textit{Part III}. The last two inequalities~\eqref{eq:descent_4} follow the argumentation of the proof of Proposition~\ref{proposition:planner_convergence}. 
    Therefore, Algorithm~\ref{alg:decoupled_coverage_control} has a descent function and converges \cite[Prop. 3.3]{Corts2002CoverageCF}.
    
    %Algorithm~\ref{alg:decoupled_coverage_control} converges to an optimal periodic configuration (Def. \ref{def:optimal_periodic_configuration}) of \eqref{eq:periodic_reference_planner} . 
    
    Assume, for contradiction, that the configuration that the algorithm converges to is not a minimizer of \eqref{eq:planner}. Convergence to any periodic configuration implies convergence of the partition sequence $\lim_{t\to\infty}  \W_{\opt{p}_{\cdot,t}}$. From \textit{Part III}, the solution of \eqref{eq:periodic_reference_planner} equivalently represents a local minimizer of \eqref{eq:planner} at convergence. Consequently, \eqref{eq:periodic_reference_planner} converges to a minimizer of \eqref{eq:planner} for a fixed periodic partition sequence, contradicting the assumption. Therefore, Algorithm~\ref{alg:decoupled_coverage_control} converges to an optimal periodic configuration of problem \eqref{eq:planner} (Def. \ref{def:optimal_periodic_configuration}).
    
    With the convergence of the reference trajectories $\lim_{t\to\infty} r^\star_{\cdot|t}$, the closed-loop systems exhibit exponential convergence to these trajectories, as established by Theorem~\ref{theorem:stability_of_tracking_mpc}.
\end{proof}

It follows from Theorem~\ref{theorem:decoupled_coverage_control} that the closed-loop trajectories of Algorithm~\ref{alg:decoupled_coverage_control} converge to an optimal periodic configuration of \eqref{eq:planner} while satisfying all constraints and ensuring collision avoidance.

\section{Nonperiodic coverage MPC}
\label{section:nonperiodic_coverage_control_mpc}
This section presents a practical approach to coverage control for nonperiodic density functions. The proposed method introduces targeted modifications to Algorithm \ref{alg:decoupled_coverage_control}. We start by presenting the following minimization problem solved by the planner at time $t$.
\begin{subequations} \label{eq:nonperiodic_reference_planner}
\begin{align} 
    %\hat{r}_{\subiplusk}(\rti, \mathbb{W}_{i,\sub}, \VNit) \coloneqq \\
    \opt{r}_{\subiplusk} \coloneqq
    \argmin_{r_{\subiplusk}} &\quad H_i^T(p_{\subiplusk}^r, \W_{i,\sub[\cdot + K]}, t+K) \\
    \textrm{s.t.} \quad &(\ref{eq:planner1}), (\ref{eq:planner3}), (\ref{eq:planner4}), (\ref{eq:planner5}),\\
    &x_{i, T - 1 \mid t} = f_i(x_{i,T - 1 \mid t}, u_{i,T - 1 \mid t}), \label{eq:nonperiodic_steady_state_constraint} \\
     &r_{\subi[K:K+N_i]} = r_{\subiplusk[0:N_i]}.\label{eq:nonperiodic_coupling_constraint}
\end{align}
\end{subequations}
Notably, we consider a horizon $T$ which is no longer associated with the period of the density function. To accommodate the nonperiodic density function, we remove the periodic constraint on the reference in \eqref{eq:planner3}. Instead, we enforce a steady-state constraint on the last element of the reference, which will ensure recursive feasibility. Therefore, reference trajectories and partition sequences are defined beyond the horizon $T$, which is necessary for computing a new partition sequence for the next iteration at time $t + K$. Therefore, we define partitions and trajectories shifted by $n \in \mathbb{N}$ steps in time for the nonperiodic algorithm as
\begin{subequations}
\begin{align}
    \label{eq:shifed_partition_nonperiodic}
     \mathbb{W}_{\cdot + n|t} &\coloneqq \{ \mathbb{W}_{\sub[n]}, \dots , 
     \mathbb{W}_{\sub[T-1]}, \dots,  \mathbb{W}_{\sub[T-1]} \},\\
     \label{eq:shifted_reference_nonperiodic}
     r_{\cdot + n|t} &\coloneqq \{ r_{n|t}, \dots,
        r_{T-1|t}, \dots, r_{T-1|t}  \},
\end{align}
\end{subequations}
where the last $n$ elements in the sequence can be repeated since they satisfy the dynamics constraints.
 
Consequently, we introduce a new scheme which extends Algorithm \ref{alg:decoupled_coverage_control} for nonperiodic density functions.
\begin{algorithm}
\caption{Nonperiodic Coverage Control MPC}
\label{alg:nonperiodic_coverage_control}
  \SetAlgoLined
    \textbf{Algorithm}~\ref{alg:decoupled_coverage_control} with Problem~\eqref{eq:periodic_reference_planner} replaced by Problem~\eqref{eq:nonperiodic_reference_planner}\\
        Shifting operations \eqref{eq:shifted_reference_periodic}-\eqref{eq:shifed_partition_periodic} replaced by \eqref{eq:shifed_partition_nonperiodic}-\eqref{eq:shifted_reference_nonperiodic}
\end{algorithm}

The properties of the previous algorithm are summarized in the following theorem.

\begin{theorem} \label{theorem:decoupled_coverage_control_nonperiodic}
    Let Assumptions~\ref{assumption1} and~\ref{assumption:local_controllability} hold. Then, for a horizon $N_i \geq \max \{N_{0,i}, \opt{N}_i\}$, it holds that

    \textit{I)} (Recursive feasibility) All the optimization problems in  Algorithm~\ref{alg:nonperiodic_coverage_control} are feasible for all $t\in\mathbb{N}$.
    
    \textit{II)} (Constraint satisfaction) The resulting closed-loop trajectories from Algorithm~\ref{alg:nonperiodic_coverage_control} satisfy the state and input constraints~\eqref{mpc_state_and_input_constraints} and ensure collision avoidance \eqref{collision_avoidance}.

    \textit{III)} The update condition \eqref{eq:partition_update_condition} is fulfilled after a finite number of time steps.
        
\end{theorem}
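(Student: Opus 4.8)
The plan is to mirror the proof of Theorem~\ref{theorem:decoupled_coverage_control}, exploiting that the nonperiodic scheme in Algorithm~\ref{alg:nonperiodic_coverage_control} differs from Algorithm~\ref{alg:decoupled_coverage_control} only through the planner~\eqref{eq:nonperiodic_reference_planner} and the shift operators~\eqref{eq:shifed_partition_nonperiodic}--\eqref{eq:shifted_reference_nonperiodic}. The two structural changes are that the periodicity constraint is replaced by the terminal steady-state constraint~\eqref{eq:nonperiodic_steady_state_constraint}, and the trust-region coupling~\eqref{eq:periodic_coupling_constraint} is replaced by the equality coupling~\eqref{eq:nonperiodic_coupling_constraint}. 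I would first argue that the three building blocks used for Theorem~\ref{theorem:decoupled_coverage_control}, namely Proposition~\ref{proposition:coupling_constraint}, Proposition~\ref{proposition:V_max_reference}, and Proposition~\ref{proposition:finite_time_partition_update}, carry over, since their proofs rely only on Assumptions~\ref{assumption1} and~\ref{assumption:local_controllability}, on the Lipschitz and stability properties of the tracking MPC, and on the shifted reference being reachable. The latter still holds because the nonperiodic shift~\eqref{eq:shifted_reference_nonperiodic} repeats the terminal element, which by~\eqref{eq:nonperiodic_steady_state_constraint} is a steady state and is therefore reachable when held indefinitely.

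For \textit{Part I}, initial feasibility of the planner~\eqref{eq:nonperiodic_reference_planner} follows from the steady-state initialization, as the constant trajectory at $x_{i,0}$ satisfies the dynamics, the interior state, input and partition constraints, and the terminal constraint~\eqref{eq:nonperiodic_steady_state_constraint}. Recursive feasibility of the planner follows from the shifted candidate $\opt{r}_{i,\cdot|t+K}=\opt{r}_{i,\cdot+K|t}$ built with~\eqref{eq:shifted_reference_nonperiodic}: the repeated terminal steady state satisfies~\eqref{eq:nonperiodic_steady_state_constraint} and the dynamics, the equality coupling~\eqref{eq:nonperiodic_coupling_constraint} holds for this candidate by construction, and Lemma~\ref{lemma:feasible_partition_update} places it in the interior of the shifted Voronoi partition sequence. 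For the tracking MPC, the key is to preserve the bound $\VNi\le V_{\max,i}$. Here the equality coupling~\eqref{eq:nonperiodic_coupling_constraint} makes the reference difference vanish, so the Lipschitz term in Proposition~\ref{proposition:coupling_constraint} is zero and the bound $\VNi\le(1-\alpha_{N,i}/\gammabari)^K V_{i,t}\le V_{\max,i}$ holds in the shifted partition; whenever~\eqref{eq:partition_update_condition} triggers a partition update, Proposition~\ref{proposition:V_max_reference} (which requires $N_i>\max\{N_{0,i},\opt{N}_i\}$) preserves the bound in the updated Voronoi partition. Recursive feasibility of the tracker then follows from Theorem~\ref{theorem:stability_of_tracking_mpc}.

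\textit{Part II} is identical to the corresponding part of Theorem~\ref{theorem:decoupled_coverage_control}: the tracking MPC~\eqref{eq:tracking_feasible_trajectory} enforces~\eqref{mpc_state_and_input_constraints} and the tightened partition constraint at every step, and the tightening by $R_{\max}$ yields collision avoidance~\eqref{collision_avoidance}; this argument is independent of periodicity. For \textit{Part III}, I would argue by contradiction as in Theorem~\ref{theorem:decoupled_coverage_control}, assuming that~\eqref{eq:partition_update_condition} never holds on an interval so that the partitions are only shifted via~\eqref{eq:shifed_partition_nonperiodic}. Because the nonperiodic shift repeats the terminal partition, after at most $T$ shifts the partition sequence becomes time-invariant, rendering the planner problem stationary. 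The planner cost is then non-increasing, since the shifted candidate is feasible and is retained whenever the cost does not strictly decrease, and it is bounded below by zero, so over the compact feasible set (Assumption~\ref{assumption1}) the references converge to a fixed trajectory with $\opt{r}_{i,\cdot|t+K}=\opt{r}_{i,\cdot+K|t}$. With such fixed, reachable references and $\VNi\le V_{\max,i}$, Proposition~\ref{proposition:finite_time_partition_update} guarantees a finite time at which~\eqref{eq:partition_update_condition} is met, contradicting the assumption.

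I expect the main obstacle to be \textit{Part III}, specifically establishing convergence of the planner references to a fixed trajectory in the nonperiodic setting. Unlike the periodic case, there is no optimal periodic configuration to converge to, which is why the theorem correctly omits such a claim, and the equality coupling~\eqref{eq:nonperiodic_coupling_constraint} together with the terminal steady-state constraint~\eqref{eq:nonperiodic_steady_state_constraint} changes the iteration structure relative to Theorem~\ref{theorem:decoupled_coverage_control}. The delicate point is to verify that freezing the partition sequence after finitely many shifts makes the planner genuinely stationary, so that the monotone-cost argument and compactness indeed force the references to a fixed point that activates the hypotheses of Proposition~\ref{proposition:finite_time_partition_update}.
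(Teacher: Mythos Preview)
Your treatment of \textit{Parts~I} and \textit{II} is essentially the paper's argument: the shifted candidate built via~\eqref{eq:shifted_reference_nonperiodic} and the terminal steady state~\eqref{eq:nonperiodic_steady_state_constraint} give planner feasibility, and the equality coupling~\eqref{eq:nonperiodic_coupling_constraint} makes the reference jump zero so that Propositions~\ref{proposition:coupling_constraint} and~\ref{proposition:V_max_reference} preserve $\VNi\le V_{\max,i}$ through both reference and partition updates.

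Your \textit{Part~III}, however, has a genuine gap, and it is precisely the obstacle you flagged. Freezing the partition sequence does \emph{not} render the planner problem stationary: the cost $H_i^T(p_{i,\cdot|t+K},\W_{i,\cdot+K|t},t+K)$ depends on $t$ through the nonperiodic density $\phi(\cdot,t+k)$, so there is no shift invariance analogous to~\eqref{eq:coverage_descent_equal}. The cost of the shifted candidate at time $t+K$ need not relate to the optimal cost at time $t$, and your monotone-descent argument collapses. Consequently you cannot conclude that the references converge to a fixed trajectory, and the hypotheses of Proposition~\ref{proposition:finite_time_partition_update} are never activated along your route.

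The paper avoids this entirely by exploiting the equality coupling~\eqref{eq:nonperiodic_coupling_constraint} directly, rather than trying to force planner convergence. Since $r_{i,K:K+N_i|t}=r_{i,0:N_i|t+K}$, the concatenation
\[
r=\big[\,r_{0:K|t},\;r_{0:K|t+K},\;\ldots,\;r_{0:K|t+nK}\,\big]
\]
is a single reachable trajectory, and the tracking MPC at any instant only sees the portion of the reference that coincides with this concatenation. Hence, from the tracker's point of view, the reference can be treated as fixed even while the planner keeps updating the tail, and the exponential-decay argument underlying Proposition~\ref{proposition:finite_time_partition_update} applies verbatim to yield the bound $\tau$. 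No convergence of the planner, and therefore no descent argument on $H^T$, is needed. This is the key idea your proposal is missing.
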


\begin{proof} $\quad$\\
    \textit{I)} The optimization problem of the planner \eqref{eq:nonperiodic_reference_planner} is always feasible since the shifted reference $\rti[\cdot + K]$ with \eqref{eq:shifted_reference_nonperiodic} is a feasible candidate for problem \eqref{eq:periodic_reference_planner} at time $t+K$ due to the incorporation of the steady-state. 
    The recursive feasibility of the tracking MPC follows analogously to Theorem~\ref{theorem:decoupled_coverage_control}.\\
    \textit{II)} The argumentation follows the proof of Theorem~\ref{theorem:decoupled_coverage_control}.\\
    \textit{III)} We leverage Proposition~\ref{proposition:finite_time_partition_update} to show the finite-time partition update condition for Algorithm~\ref{alg:nonperiodic_coverage_control}. Although this proposition requires a fixed reference trajectory, we show that this is still valid for a reference trajectory that is updated over time with the consistency constraint in \eqref{eq:nonperiodic_coupling_constraint}. 
    In particular, consider $r = \left [ r_{0:K|t}, r_{0:K|t + K}, \ldots,  r_{0:K|t + nK}\right ] \forall n \in \mathbb{N}$. 
    This reference trajectory is reachable, and the operation of the tracking MPC depends solely on this part of the reference, which can be treated as fixed. Thus, the arguments in Proposition~\ref{proposition:finite_time_partition_update} leveraging the convergence of the tracking MPC remain valid, i.e., the update condition \eqref{eq:partition_update_condition} are satisfied after at most $\tau$ steps. 
%    For instance, consider  the reference trajectory $r = \left [ r_{0:K|t}, r_{0:K|t + K}, \ldots,  r_{0:K|t + nK}\right ] \forall n \in \mathbb{N}$, which is constructed as a concatenation of all references that are applied to the tracking MPC until a partition update occurs. It follows that $r$ is not modified and therefore fixed from the tracking MPC perspective since it only observes the first $N_i + K$ elements of the reference at each time $t$. Additionally, $r$ is still reachable and satisfies the dynamics constraints \eqref{eq:planner1} due to the equality constraint \eqref{eq:nonperiodic_coupling_constraint}. Hence, Proposition~\ref{proposition:finite_time_partition_update} ensures a partition update at time $t' \in \left[t, t + \tau \right]$ for some finite and uniform bound $\tau$.
\end{proof}
\begin{comment}
%%Not possible to proof; since cost could actually increase; cost at steady-state doesn't, but that's not all we optimize...    
While the nonperiodic nature of the problem does not allow us to show convergence of the closed-loop trajectories from Algorithm~\ref{alg:nonperiodic_coverage_control}, we provide the following corollary which applies to any static density function.

\begin{corollary}
Suppose the conditions in Theorem~\ref{theorem:decoupled_coverage_control_nonperiodic} hold and the density is static, i.e., $\phi(q,t) = \phi(q,t+1) \, \forall t \in \mathbb{N},\ \forall q \in \mathbb{A}$. Then, the closed-loop trajectories of the agents in Algorithm~\ref{alg:nonperiodic_coverage_control} converge to a steady-state configuration.
\end{corollary}

\begin{proof}
    Given the time-invariant nature of the density function, any solution to the minimization problem in \eqref{eq:nonperiodic_reference_planner} serves as an upper bound on the coverage cost at each iteration. By Proposition~\ref{proposition:planner_convergence}, the agent trajectories converge. Suppose, for contradiction, that the agents do not converge to a steady-state configuration. This would imply persistent variations in agent trajectories between time steps, contradicting the convergence result of Proposition~\ref{proposition:planner_convergence}. Therefore, the agents converge to a steady-state configuration.
\end{proof}
\end{comment}

\section{Discussion}
\label{section:discussion}
In the following, we discuss the properties of the proposed periodic and nonperiodic coverage control algorithms and compare them with existing schemes for static density functions.

\subsection{Periodic Coverage Control}
To address the joint problem of reference and partition computation, we first extended Lloyd's algorithm \cite{lloyd} to periodic dynamic problems. More specifically, we solve for reachable and periodic trajectories that minimize the coverage cost for the entire period $T$ and compute the Voronoi partition for each element in the trajectory sequence. Each iteration results in a reduced coverage cost through both trajectory optimization and partition update. This methodology differs from existing coverage control algorithms for static densities, where a constant partition is used in the MPC scheme \cite{rickenbach2023}. Furthermore, we introduced the concept of optimal periodic configurations, which represent a locally optimal solution of partitions and trajectories for a periodic density, as a natural extension of the steady-state problem considered in~\cite{rickenbach2023} for the static coverage problem.

We address this problem by developing a two-layer framework composed of a reference planner and a tracking MPC. 
Similar to \cite{rickenbach2023}, we use a tracking MPC without terminal ingredients, which facilitates implementability. Furthermore, we employ a partition update verification to ensure recursive feasibility, which is also inspired by this work.
However, our framework departs from \cite{rickenbach2023} in several key aspects. 
Most notably, the approach in~\cite{rickenbach2023} ensures recursive feasibility by using artificial references in the tracking MPC. This increases computational complexity, especially in the considered problem of periodic references.  In contrast, the proposed planner directly computes reachable references that minimize the coverage cost while accounting for the tracker’s region of attraction, the active partition sequence, and the latest information on the density function. 
Furthermore, we account for the computational complexity of the planner by using a multi-rate implementation, which requires updates of the reference trajectory only every $K$ time steps.
Thus, we can implement a simple trajectory tracking MPC that maintains a bound on the tracking cost and ensures closed-loop recursive feasibility, even under updates of the reference trajectory.  
To conclude, our proposed framework converges to an optimal periodic configuration with reduced computational complexity through the proposed architecture.

We note that there exists a trade-off between the convergence rate of the periodic algorithm and the size of the region of attraction of the tracking MPC, expressed by the bounds $\Vmaxi$. The convergence speed of the algorithm is also related to the execution rate of the planner, which is limited by the computational resources of the vehicle. Meanwhile, the tracking MPC is computationally efficient and can be executed at high rates, which is crucial for reliable trajectory tracking. 

Finally, we highlight that our method can be easily extended to different stage cost functions as in \cite{rickenbach2023}, which might be more suitable for non-holonomic systems or different applications. Overall, Algorithm \ref{alg:decoupled_coverage_control} provides a practical two-layer implementation for time-varying coverage control with robust convergence, constraint satisfaction, and recursive feasibility guarantees for periodic density functions.

\subsection{Nonperiodic Coverage Control}
To solve the problem of nonperiodic coverage control, we propose Algorithm~\ref{alg:nonperiodic_coverage_control}, which adapts the previous scheme by replacing the periodic constraint with a final steady-state and a more restrictive equality constraint on the computation of new references. With these modifications, we guarantee recursive feasibility, constraint satisfaction, and finite-time partition updates. The proposed scheme balances trajectory optimality with computational complexity, both of which depend on the execution rate of the planner. We further demonstrate the effective performance for both periodic and nonperiodic densities in Section~\ref{section:results}, achieving rapid coverage cost reduction. This makes the algorithm well-suited for resource-limited and nonperiodic environments.

\section{Experimental Results}
\label{section:results}
This section describes the experimental setup and results that demonstrate our proposed methods.

\subsection{Experimental Setup}

The following experiments were performed on a fleet of $M=4$ miniature RC cars, scaled at 1:28, in combination with CRS, a software framework intended for both single and multi-agent robotics and control \cite{carron2023chronos}. We employ ROS \cite{ros} to facilitate communication between tracker nodes, planner nodes, and agents. The planned trajectories from \eqref{eq:planner} are optimized using \textit{IPOPT} \cite{wachter2006implementation}, while the tracking MPC problem \eqref{eq:tracking_feasible_trajectory} is solved using \textit{Acados}~\cite{verschueren2022acados}. Additionally, the vehicle's state is measured using the motion capture system from Qualiysis. The nonlinear dynamics of the vehicle are approximated with a kinematic bicycle model \cite{rajamani2011vehicle}. The continuous-time dynamics are represented as
\begin{align*}
\dot{x} &= 
\begin{bmatrix}
    \dot{p}_{\mathrm{x}} \\
    \dot{p}_{\mathrm{y}} \\
    \dot{\psi} \\
    \dot{v}
\end{bmatrix}
=
\begin{bmatrix}
    v \cos(\psi + \beta) \\
    v \sin(\psi + \beta) \\
    \frac{v}{l_{\mathrm{r}}} \sin(\beta) \\
    a
\end{bmatrix},\\
\beta &= \arctan\left(\frac{l_{\mathrm{r}}}{l_{\mathrm{r}} + l_{\mathrm{f}}} \tan(\delta)\right),
\end{align*}
where the state and input vectors are defined as 
\begin{align*}x = \begin{bmatrix} p_{\mathrm{x}} & p_{\mathrm{y}} & \psi & v \end{bmatrix}^\top,
 u = \begin{bmatrix} \delta & a \end{bmatrix}^\top.
\end{align*}
In this context, $p_{\mathrm{x}/\mathrm{y}}$ refers to the position of the vehicle, $\psi$ denotes its heading angle, $v$ represents the velocity, $\beta$ indicates the slip, $\delta$ represents the steering angle, and $a$ describes the acceleration. The function $f$ is derived through the application of a Runge-Kutta 4th-order discretization method at $33$ ms. 

Note that the known and Lipschitz continuous dynamics required by Assumption \ref{assumption1} are satisfied by the kinematic bicycle model, and we employ compact constraint sets for both state and input vectors. Furthermore, Assumption \ref{assumption:local_controllability} holds if the linearized dynamics are stabilizable, which is valid for the vehicle model assumed that a minimal positive velocity can be ensured.

The density function $\phi$ is characterized as a two-dimensional multivariate normal distribution with a time-varying mean $(\mu_x(t), \mu_y(t)) \in \mathbb{A}$
\begin{align*}
        \phi(q,t) &= e^{- \frac{1}{2 \sigma^2} \left( (q_x - \mu_x(t))^2 + (q_y - \mu_y(t))^2 \right)}.
\end{align*}
The selected area of operation $\mathbb{A}$ has dimensions $[-2,2] \times [-2,2]$ in $\mathrm{m}$. The quadratic cost is given by $Q = \text{diag}(180, 180, 1, 1)$ and $R = \text{diag}(0.1,0.1)$. Further, we select $\LipV = 180$, $1 - \frac{\alpha_{N,i}}{\gammabari} = 0.95$, $V_{\max,i} = 70$, $\epsilon = 0.005$ and $R_{\max} = 0.055$ m.

The overall implementation considers a prediction horizon of $N_i = 20$ and a sampling time of $33 \, \mathrm{ms}$ for the tracking MPC. The planner is executed every $K = 190$ steps for the periodic planner with $T=150$, and $K \in \{30,60\}$ for the nonperiodic planner with a horizon $T = 100$.

\subsection{Results for Periodic Densities}

\begin{figure}[!h]
    \centering
    \begin{subfigure}{1.0\columnwidth}
        \includegraphics[width=\textwidth]{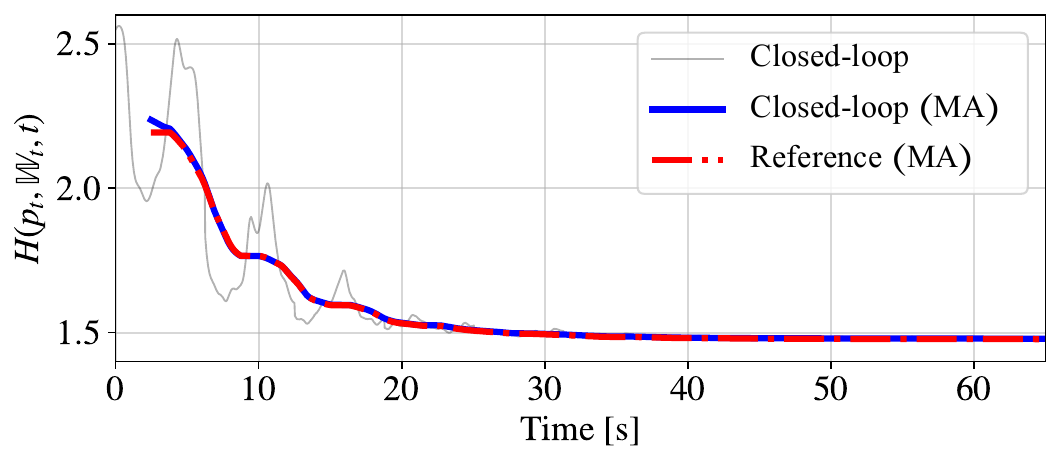}
        \caption{Closed-loop coverage cost over time. The moving average (MA) of the coverage cost is presented for both the reference trajectories and the closed-loop trajectories.}
         \hfill
        \label{fig:coverage_cost_periodic}
    \end{subfigure}

    \begin{subfigure}{1.0\columnwidth}
        \includegraphics[width=\textwidth]{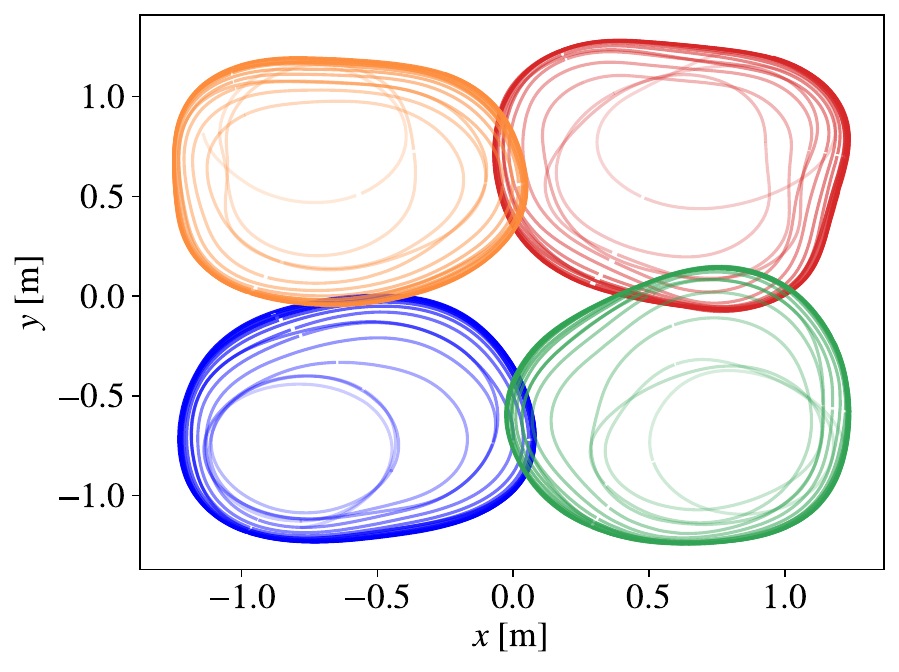}
        \caption{Closed-loop trajectories of the four agents, shown in different colors.}
        \label{fig:closed_loop_periodic}
    \end{subfigure}
    \caption{Performance of the periodic coverage MPC (Algorithm~\ref{alg:decoupled_coverage_control}) for a periodic density function. (a) Coverage cost over time, and (b) agent trajectories.}
\end{figure}

Algorithm \ref{alg:decoupled_coverage_control}, which runs the periodic planner, is evaluated to demonstrate the evolution of the coverage cost over time. The mean of the periodic density follows a circular motion of radius $r = 0.9 \, \mathrm{m}$ with a period of $4.95$s. Figure \ref{fig:coverage_cost_periodic} illustrates the coverage cost corresponding to each time step $t$, i.e., $H(p_t, \W_t, t)$, and its moving average of window $T$, which is proportional to the closed-loop coverage cost $H^T$. It can be observed that the average coverage cost decreases over time, achieving convergence for both reference and closed-loop trajectories. The closed-loop trajectories of the four agents are depicted in Figure~\ref{fig:closed_loop_periodic}. It can be seen that these trajectories converge to an optimal periodic configuration as delineated in Definition~\ref{def:optimal_periodic_configuration}. It should be noted that, despite the paths overlapping in the figure, the minimum distance between agents is always ensured. Additionally, the partition update conditions are satisfied in every iteration.

\begin{figure}[!h]
    \centering
    \begin{subfigure}{1.0\columnwidth}
        \includegraphics[width=\textwidth]{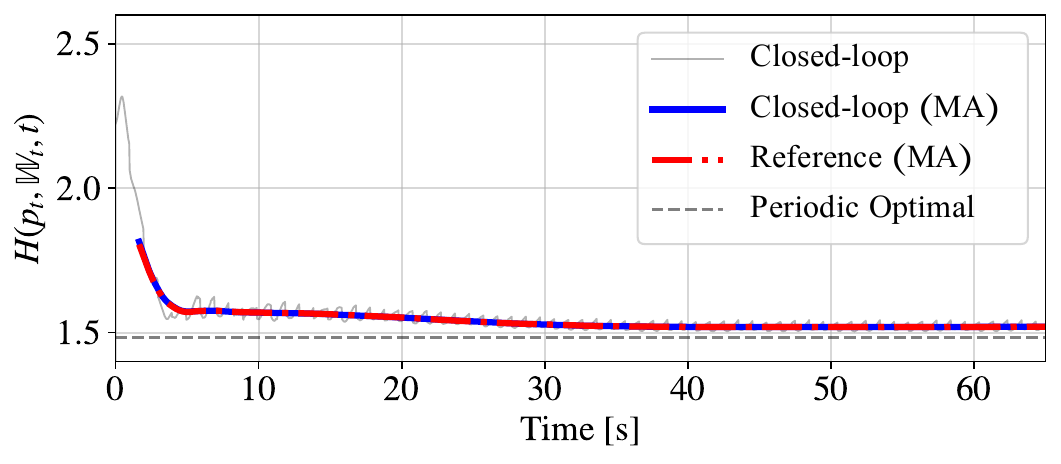}
        \caption{Closed-loop coverage cost over time. The moving average (MA) of the reference and closed-loop coverage cost is presented. Additionally, the optimal closed-loop coverage cost achieved at convergence with the periodic planner is presented.}
        \label{fig:coverage_cost_periodic_steady_state}
    \end{subfigure}

    \begin{subfigure}{1.0\columnwidth}
        \includegraphics[width=\textwidth]{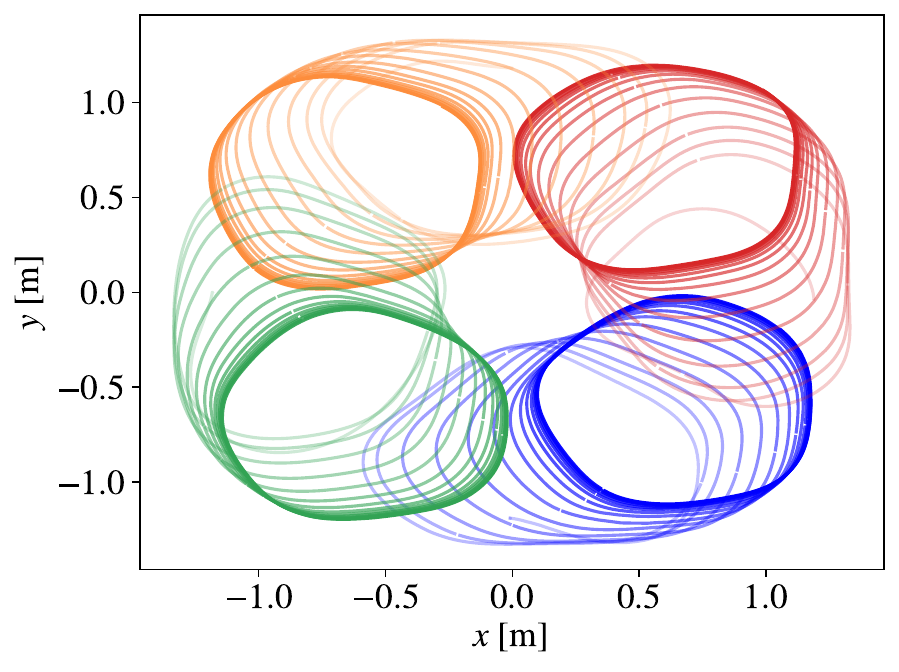}
        \caption{Closed-loop trajectories of the four agents, shown in different colors.}
        \label{fig:closed_loop_periodic_steady_state}
    \end{subfigure}
    
    \caption{Performance of the nonperiodic coverage MPC (Algorithm~\ref{alg:nonperiodic_coverage_control}) for a periodic density function. (a) Coverage cost over time, and (b) agent trajectories.}
\end{figure}

The results for the nonperiodic planner (Algorithm~\ref{alg:nonperiodic_coverage_control}) applied to the same periodic density are further presented. As illustrated in Figure~\ref{fig:coverage_cost_periodic_steady_state}, the coverage cost decreases faster, which can be attributed to the absence of a periodic constraint in the trajectory planning and a shorter horizon $T$. Subsequently, the coverage cost converges to a value slightly above the cost attained at convergence by Algorithm~\ref{alg:decoupled_coverage_control} due to the steady-state constraint \eqref{eq:nonperiodic_steady_state_constraint}. Additionally, the closed-loop trajectories are depicted in Figure~\ref{fig:closed_loop_periodic_steady_state}, showing a configuration analogous to that observed with the periodic planner.

\subsection{Results for Nonperiodic Densities}

This section provides the results of the nonperiodic Algorithm~\ref{alg:nonperiodic_coverage_control} as it addresses a nonperiodic density function whose mean follows an arbitrary path along the space $\mathbb{A}$.

\begin{figure}[!h]
    \centering
    \begin{subfigure}{1.0\columnwidth}
        \includegraphics[width=\textwidth]{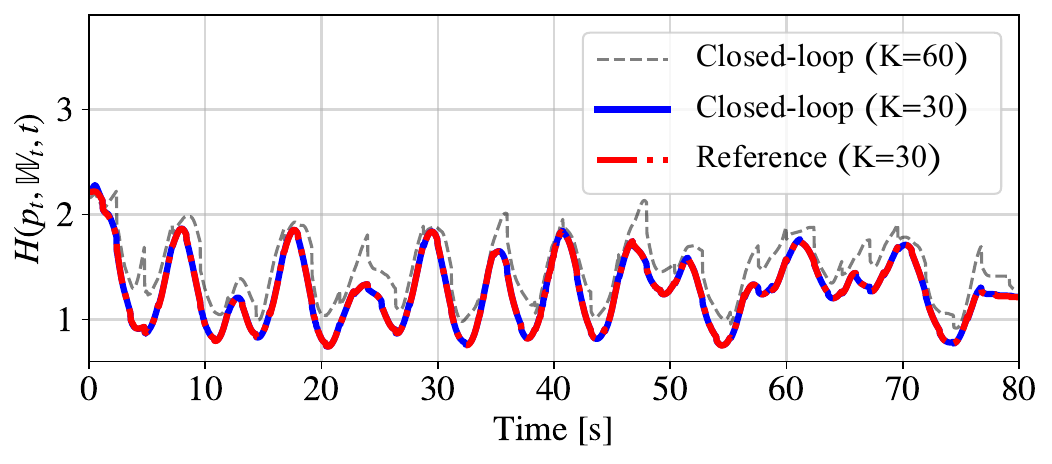}
        \caption{Closed-loop coverage cost over time with $K=30$ and $K=60$. The optimal cost of the planner is also presented for $K=30$.}
        \label{fig:coverage_cost_nonperiodic}
    \end{subfigure}

    \begin{subfigure}{1.0\columnwidth}
        \includegraphics[width=\textwidth]{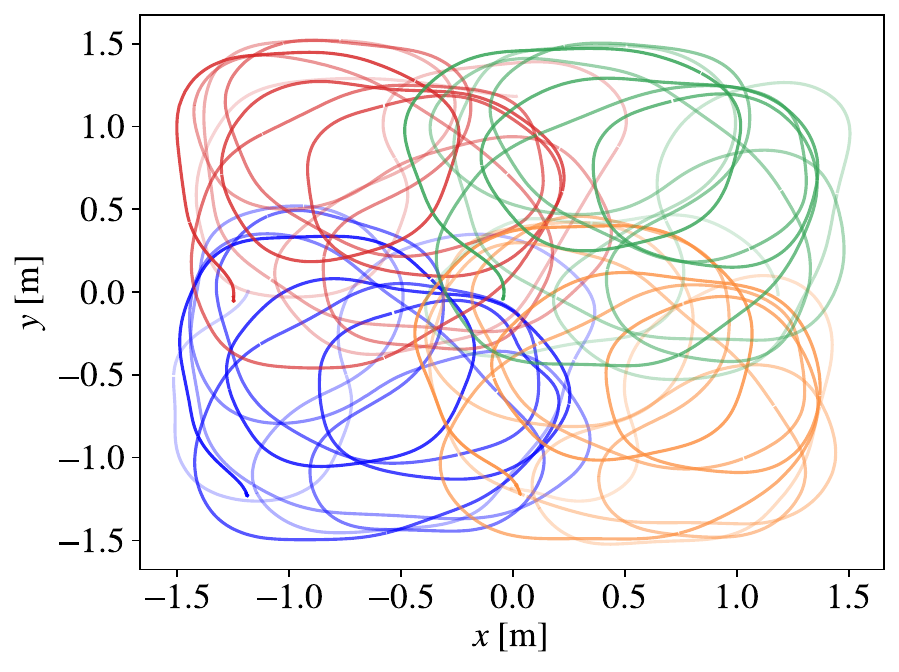}
        \caption{Closed-loop trajectories of the four agents for $K=30$, shown in different colors.}
        \label{fig:closed_loop_nonperiodic}
    \end{subfigure}
    
    \caption{Performance of the nonperiodic coverage MPC (Algorithm~\ref{alg:nonperiodic_coverage_control}) for a nonperiodic density function. (a) Coverage cost over time for different values of $K$, and (b) closed-loop agent trajectories.}
\end{figure}

Figure~\ref{fig:coverage_cost_nonperiodic} shows the temporal progression of the coverage cost associated with the Algorithm~\ref{alg:nonperiodic_coverage_control}. The coverage cost for $K=30$ is presented for both the planner references and tracker closed-loop, demonstrating a lower value compared to the closed-loop coverage cost observed for $K=60$. This hilights the trade-off between the algorithm's optimality and the frequency at which new references are generated. It is important to acknowledge that the coverage cost does not converge, given the nonperiodic nature of the density function.

\subsection{Computational Complexity}
\begin{figure}[!h]
\centering
\small
\begin{tabular}{|l|r|r|r|}
\hline
\textbf{Experiment} & \textbf{Mean (ms)} & \textbf{Std (ms)} & \textbf{Max (ms)} \\ \hline
Periodic (Alg.\ref{alg:decoupled_coverage_control})      & 944.24 & 328.43 & 2030.00 \\ \hline
Periodic (Alg.\ref{alg:nonperiodic_coverage_control})  & 107.00 & 70.41 & 453.78 \\ \hline
Nonperiodic (Alg.\ref{alg:nonperiodic_coverage_control}) & 96.69  & 61.34  & 490.83  \\ \hline
\end{tabular}
\caption{Solve times of IPOPT for different experiments (in milliseconds).}
\label{tab:ipopt_times}
\end{figure}

Figure~\ref{tab:ipopt_times} presents a comparison of planner solve times for each experiment provided by the \textit{IPOPT} numerical solver. It is observed that the nonperiodic  Algorithm~\ref{alg:nonperiodic_coverage_control} attains a lower average solve time compared to the periodic Algorithm~\ref{alg:decoupled_coverage_control}. This phenomenon is attributed to the horizon length $T$ employed, which can be shorter than the density period for Algorithm~\ref{alg:nonperiodic_coverage_control}. Furthermore, the periodic constraint in \eqref{eq:planner2} increases the complexity of the problem, resulting in increased solve times. Lastly, the solve times associated with Algorithm~\ref{alg:nonperiodic_coverage_control} are independent of the periodic or nonperiodic nature of the density function. Solve times for the tracking MPC in ACADOS consistently remain below $10$ ms with an average of $0.418$ ms, suggesting a significantly reduced computational complexity compared to the planning phase. Finally, the solve times respect the planner and tracker rates for both algorithms.

\section{Conclusion}
\label{section:conclusion}
The presented framework addresses the challenges of time-varying coverage control in nonlinear constrained dynamic systems by developing a two-layer planner-tracker MPC scheme. This framework integrates optimal trajectory planning with a tracking MPC without terminal ingredients. We rigorously establish guarantees on recursive feasibility, adherence to state and input constraints, and collision avoidance. Furthermore, we ensure convergence to an optimal periodic configuration for periodic density functions. The effectiveness of our approach is successfully validated in hardware experiments on a fleet of four small-scale vehicles.

\bibliography{bibliography} 
\bibliographystyle{ieeetr}

\appendix
\subsection{Proof of Proposition \ref{proposition:V_max_reference}}
\label{appendix:proof_V_max_reference}
We show that Proposition \ref{proposition:V_max_reference} guarantees the bound $\Vmaxi$ on the tracking cost in \eqref{eq:fast_coverage_exp_cost_decrease_K_steps} after applying a partition update.
\begin{proof}
Consider a reachable reference $\opt{r}_{\subiplusk}$ satisfying constraints in \eqref{eq:periodic_reference_planner} and the candidate partition sequence $\hat{\mathbb{W}}_{\subplusk} := \W_{\opt{p}_{\subplusk}}$. The update condition~\eqref{eq:partition_update_condition} ensures that the candidate sequence $(\hat{x}_{\subiplusk}, \hat{u}_{\subiplusk})$ from \eqref{eq:fast_coverage_candidate_sequence} is a feasible solution to problem~\eqref{eq:tracking_feasible_trajectory} with the updated partition sequence, i.e., $\hat{\mathbb{W}}_{i,\subplusk}$. Further, the following relation holds
\begin{align} \label{eq:bound_on_value_function_with_new_partition}
    \begin{split}
        \VNi&(x_{i, t+K}, \opt{r}_{\subi[\cdot + K]}, \hat{\mathbb{W}}_{i,\subplusk})\\
        &\leq \;  \JNi(\hat{x}_{\subiplusk}, \hat{u}_{\subiplusk},  \opt{r}_{\subi[\cdot + K]}) \\
        & \stackrel{\eqref{eq:fast_coverage_candidate_sequence}}{=} \sum_{k=1}^{N_i-2} \ell_i(\optt{x_{i, k | t+K-1}}, \optt{u_{i, k | t+K-1}},  \opt{r}_{\subi[K + k - 1]}) \\
        & \quad + \ell_i(\optt{x_{i, N_i-1 | t+K-1}}, \opt{u_{i, K+N_i-2| t}},  \opt{r}_{\subi[K+N_i - 2]}) \\ 
        & \quad + \ell_i(\hat{x}_{i, N_i-1 | t+K}, \opt{u_{i, K+N_i-1 | t}},  \opt{r}_{\subi[K+N_i - 1]})  \\
        & \leq \VNi(x_{i, t+K-1},  \opt{r}_{\subi[\cdot + K - 1]}, \mathbb{W}_{i,\cdot + K - 1 | t}) \\
        & \quad - \ell_i(\optt{x_{i, 0 | t+K-1}}, \optt{u_{i, 0 | t+K-1}},  \opt{r}_{\subi[K - 1]}) \\
        & \quad + \ell_i(\hat{x}_{i, N_i-1 | t+K}, \opt{u_{i, K+N_i-1 | t}},  \opt{r}_{i, K+N_i-1 | t}). \\
    \end{split}
\end{align}

We bound the last term in \eqref{eq:bound_on_value_function_with_new_partition} using the bounds on the stage cost~\eqref{eq:bounds_on_stage_cost} and Lipschitz continuity of the dynamics (Asm.~\ref{assumption1}):
    \begin{align*}
    \begin{split}
        &\ell_i(\hat{x}_{i, N_i-1 | t+K}, \opt{u_{i, K+N_i-1 | t}},  \opt{r}_{i, K+N_i-1 | t})\\
        & \quad \stackrel{\eqref{eq:bounds_on_stage_cost}}{\leq} \alpha_{2,i} \mynorm{\hat{x}_{i, N_i-1 | t+K} - \opt{x_{i, K+N_i-1 | t}}}^2 \\ 
        &  \quad \stackrel{\mathrm{Asm.} \ref{assumption1}}{\leq} \alpha_{2,i} \mathcal{L}_{f_i}^2 \mynorm{\optt{x_{i, N_i-1 | t+K-1}} - \opt{x_{i, K+N_i-2 | t}}}^2 \\ 
        &  \quad \stackrel{\eqref{eq:bounds_on_stage_cost}}{\leq} \frac{\alpha_{2,i}}{\alpha_{1,i}} \mathcal{L}_{f_i}^2 \ell_i(\optt{x_{i, N_i-1 | t+K-1}}, \opt{u_{i, K+N_i-2| t}},  \opt{r}_{i, K+N_i-2 | t}) \\
        &  \quad \leq \frac{\alpha_{2,i}}{\alpha_{1,i}} \mathcal{L}_{f_i}^2 \gammabari \left( \frac{\gammabari - 1}{\gammabari} \right)^{N_i} \\
        & \quad  \quad \times\ell_i(\optt{x_{i, 0 | t+K-1}}, \optt{u_{i, 0 | t+K-1}},  \opt{r}_{\subi[K - 1]}),
        % & \leq \frac{\alpha_{2,i}}{\alpha_{1,i}} \mathcal{L}_{f_i}^2 \VNi(x_{i, t+K-1},  \opt{r}_{i, \cdot | t+K-1}, \mathbb{W}_{\cdot | t+K-1, i}),
    \end{split}
\end{align*}

    where the last equation follows from \cite[Eq. (20)-(21)]{koehlerECO}.
    Applying this bound in \eqref{eq:bound_on_value_function_with_new_partition} yields
    \begin{align*}
        \begin{split}
            \VNi(&x_{i, t+K},  \opt{r}_{\subi[\cdot + K]}, \hat{\mathbb{W}}_{i,\subplusk}) \\
            & \leq \VNi(x_{i, t+K-1},  \opt{r}_{\subi[\cdot + K - 1]}, \mathbb{W}_{i,\cdot + K - 1 | t}) \\
            &\quad + \underbrace{ \left( \frac{\alpha_{2,i}}{\alpha_{1,i}} \mathcal{L}_{f_i}^2 \gammabari \left( \frac{\gammabari - 1}{\gammabari} \right)^{N_i} - 1 \right) }_{\eqqcolon \rho_{N,i}} \\
            & \quad \times \ell_i(\optt{x_{i, 0 | t+K-1}}, \optt{u_{i, 0 | t+K-1}},  \opt{r}_{\subi[K - 1]}) \\
            & \leq (1 + \rho_{N,i}) \VNi(x_{i, t+K-1},  \opt{r}_{\subi[\cdot + K - 1]}, \mathbb{W}_{i,\cdot + K - 1 | t}).
        \end{split}
    \end{align*}
    By inspection, it holds that $1 + \rho_{N,i} \leq  1 - \frac{\alpha_{N,i}}{\gammabari}$ for
    \begin{align}
    \label{eq:n_star_lower_bound}
    N_i \geq \opt{N}_i \coloneqq \frac{\ln \left( \frac{\alpha_{2,i}}{\alpha_{1,i}} \mathcal{L}_{f_i}^2 \gammabari^2 \right) - \ln (\gammabari - \alpha_{N,i})}{\ln \gammabari - \ln (\gammabari - 1)}.
    \end{align}
    Then, using \eqref{eq:fast_coverage_exp_cost_decrease_K_steps} for $K-1$ steps, we get
    \begin{align*}
        &\VNi(x_{i, t+K},  \opt{r}_{\subi[\cdot + K]}, \hat{\mathbb{W}}_{i,\subplusk})\\
        &\quad \stackrel{\eqref{eq:fast_coverage_exp_cost_decrease_K_steps}}{\leq}  (1 + \rho_{N,i}) \left( 1 - \frac{\alpha_{N,i}}{\gammabari} \right)^{K-1}
        \VNi(x_{i, t}, \rti, \mathbb{W}_{i,\sub}).\\
        &\quad \stackrel{\eqref{eq:n_star_lower_bound}}{\leq} 
            \left( 1 - \frac{\alpha_{N,i}}{\gammabari} \right)^K
            \VNi(x_{i, t}, \rti, \mathbb{W}_{i,\sub}),
    \end{align*}

    Finally, considering that  $\VNi(x_{i, t}, \rti, \mathbb{W}_{i,\sub}) \leq V_{\max,i}$ and ${\gammabari} > \alpha_{N,i} > 0$, we get \eqref{eq:V_max_reference_in_new_partitions}.
\end{proof}

\subsection{Proof of Proposition \ref{proposition:finite_time_partition_update}}
\label{appendix:close_enough}
We show that the partition update condition is satisfied in finite time by following similar steps to the proof in \cite[Theorem 2]{rickenbach2023}. 
\begin{proof}
We consider a proof of contradiction, i.e., suppose the condition \eqref{eq:partition_update_condition} is not satisfied for $t' \in \left[t, t + \tau \right]$ with some finite $\tau \geq 0$, and hence condition~\eqref{eq:partition_update_condition} can not be satisfied for a fixed reachable reference. Note that for a periodic reference, i.e., $r_{i,\cdot|t+T} = r_{i,\cdot + T|t}\, \forall t > 0$, the reference is defined for an infinite horizon.

First, we derive a constant $V_\epsilon > 0$, such that $V_{i,t'}\leq V_\epsilon$ implies that conditions \eqref{eq:partition_update_condition} are satisfied. Further, we show that we can find some uniform $\tau \geq 0$ for a fixed, reachable reference and any $V_{i,t}\leq \Vmaxi$ we such that $V_{i,t+\tau} \leq V_\epsilon$ $\forall i \in \M$.

\textit{Part 1:} Let $\hat{\W}_{\cdot|t'+1} := \W_{\opt{p}_{\cdot|t'+1}}$. Condition \eqref{eq:partition_update_condition} requires the candidate sequence $(\hat{x}_{i,\cdot|t'+1}, \hat{u}_{i,\cdot|t'+1})$ from \eqref{eq:fast_coverage_candidate_sequence} to lie within the candidate partition sequence $\hat{\Bar{\W}}_{i,\cdot|t'+1}$. According to Lemma~\ref{lemma:feasible_partition_update}, this is ensured if the candidate sequence is close enough to the reference trajectory, i.e., 
\begin{equation}
\label{eq:close_enough}
    \mynorm{ \hat{p}_{i,k|t'+1} - \opt{p_{i,k|t'+1}}} \leq\|C_i\| \mynorm{ \hat{x}_{i,k|t'+1} - \opt{x_{i,k|t'+1}}} \leq \epsilon
\end{equation}
for $k = 0,\ldots, N_i$. Next, we derive a constant $V_\epsilon > 0$ such that this condition holds if $V_{i,t'} \leq V_\epsilon$. For this, we compute the following upper bound.
\begin{align} \label{eq:coverage_mpc_candidate_bound}
    \begin{split}                 
        %&\JNi(\hat{x}_{i,\cdot|t'+1}, \hat{u}_{,\cdot|t'+1}, \opt{r}_{i,\cdot|t'+1})\\
        & \sum_{k=0}^{N_i} \ell_i(\hat{x}_{i,k|t'+1}, \hat{u}_{i,k|t'+1}, \opt{r}_{i,k|t'+1})\\
        & \stackrel{\eqref{eq:fast_coverage_candidate_sequence}}{=} \; \sum_{k=1}^{N_i-2} \ell_i(\optt{x_{i,k|t'}}, \optt{u_{i,k|t'}}, \opt{r_{i,k|t'}}) \\
        & \quad + \ell_i(\optt{x_{i,N_i-1|t'}}, \opt{u_{i,N_i-1|t'}}, \opt{r_{i,N_i-1|t'}}) \\
        & \quad + \ell_i(\hat{x}_{i,N_i -1 | t' + 1}, \opt{u_{i,N_i | t'}}, \opt{r_{i,N_i | t'}}) \\
        & \quad + \ell_i(\hat{x}_{i,N_i | t' + 1}, \opt{u_{i,N_i+1 | t'}}, \opt{r_{i,N_i+1 | t'}}),  \\
        & \leq \JNi(\optt{x_{i, \cdot | t'}}, \optt{u_{i, \cdot | t'}}, \opt{r_{i, \cdot | t'}})\\
        & \quad + \ell_i(\hat{x}_{i,N_i -1 | t' + 1}, \opt{u_{i,N_i | t'}}, \opt{r_{i,N_i | t'}}) \\
        & \quad + \ell_i(\hat{x}_{i,N_i | t' + 1}, \opt{u_{i,N_i+1 | t'}}, \opt{r_{i,N_i+1 | t'}}),\\
    \end{split}
\end{align}
where we denoted $\hat{u}_{i,N_i|t'+1}:=\opt{u_{i,N_i+1|t'}}$.
In particular, the last two terms in \eqref{eq:coverage_mpc_candidate_bound} can be upper bounded using \eqref{eq:bounds_on_stage_cost} and Lipschitz continuity of $f_i$ from Assumption~\ref{assumption1}:  
\begin{align*}
\begin{split}
    &\ell_i(\hat{x}_{i,N_i -1 | t' + 1}, \opt{u_{i,N_i | t'}}, \opt{r_{i,N_i | t'}}) \\
        & \quad \quad \stackrel{\eqref{eq:bounds_on_stage_cost}, \,\mathrm{Asm.}  \ref{assumption1}}{\leq} \alpha_{2,i} \mathcal{L}_{f_i}^2 \mynorm{\optt{x_{i,N_i-1|t'}} - \opt{x_{i,N_i-1|t'}}}^2, \\
    &\ell_i(\hat{x}_{i,N_i | t' + 1}, \opt{u_{i,N_i+1 | t'}}, \opt{r_{i,N_i+1 | t'}})\\
        & \quad \quad \stackrel{\eqref{eq:bounds_on_stage_cost}, \, \mathrm{Asm.}  \ref{assumption1}}{\leq} \alpha_{2,i} \mathcal{L}_{f_i}^4 \mynorm{\optt{x_{i,N_i-1|t'}} - \opt{x_{i,N_i-1|t'}}}^2.
\end{split}
\end{align*}
Then, using $\VNit = \JNi( \optt{\xti}, \optt{\uti}, \opt{\rti} )$ and
\begin{align*}
\begin{split}
    &\mynorm{\optt{x_{i,N_i-1|t'}} - \opt{x_{i,N_i-1|t'}}}^2\\
    &\quad \stackrel{\eqref{eq:bounds_on_stage_cost}}{\leq} \frac{1}{\alpha_{1,i}} \mynorm{\optt{x_{i,N_i-1|t'}} - \opt{x_{i,N_i-1|t'}}}_{Q_i}^2
    \leq \frac{1}{\alpha_{1,i}} V_{i,t'},
\end{split}
\end{align*}
we obtain
\begin{equation}
\label{eq:sum_li_candidate_bound}
    %\JNi(\hat{x}_{i,\cdot|t'+1}, \hat{u}_{i,\cdot|t'+1}, \opt{r}_{i,\cdot|t'+1})
    \sum_{k=0}^{N_i} \ell_i(\hat{x}_{i,k|t'+1}, \hat{u}_{i,k|t'+1}, \opt{r}_{i,k|t'+1})
    \leq (1 + \beta_i) V_{i,t'},
\end{equation}
where $\beta_i = \frac{\alpha_{2,i}}{\alpha_{1,i}} (\mathcal{L}_{f_i}^2 + \mathcal{L}_{f_i}^4)$. 
Consequently, with \eqref{eq:bounds_on_stage_cost} we have for $k = 0,\ldots,N_i$,
\begin{align}
\begin{split}
    &\mynorm{ \hat{x}_{i,k|t'+1} - \opt{x_{i,k|t'+1}}}^2\\
    & \quad \stackrel{\eqref{eq:bounds_on_stage_cost}}{\leq} \frac{1}{\alpha_{1,i}} \ell_i(\hat{x}_{i,k|t'+1}, \hat{u}_{i,k|t'+1}, \opt{r}_{i,k|t'+1}) \\
    & \quad \stackrel{\eqref{eq:sum_li_candidate_bound}}{\leq} \frac{(1+\beta_i) V_{i,t'}}{\alpha_{1,i}} \leq \frac{\epsilon^2}{\|C_i\|^2}.
\end{split}
\end{align}

    The last inequality ensures \eqref{eq:close_enough} if
    \begin{equation}
        V_{i,t'} \leq V_{\epsilon} := \frac{\alpha_{1,i} \epsilon^2}{(1 + \beta_i) \|C_i\|^2}.
    \end{equation}
\textit{Part 2:}
    Inequality~\eqref{eq:fast_coverage_exp_cost_decrease_K_steps} from Theorem~\ref{theorem:stability_of_tracking_mpc} shows that the value function of the tracking MPC exponentially converges to a fixed, reachable reference. Thus, for any $V_{\epsilon} > 0$ and any $\VNit \leq V_{\max,i} \, \forall \, t \geq 0$, there is a sufficiently large $\tau$ which implies that $V_{i,t+\tau} \leq V_\epsilon$, i.e.
    \begin{align*}
    V_{i, t + \tau} &\leq \left( 1 - \frac{\alpha_{N,i}}{\gammabari} \right)^{\tau} V_{i, t} \\
     &\leq \left( 1 - \frac{\alpha_{N,i}}{\gammabari} \right)^{\tau} \Vmaxi \\
    &\leq  \max_{i \in \M} \left( 1 - \frac{\alpha_{N,i}}{\gammabari} \right)^{\tau} \Vmaxi \leq V_{\epsilon}, \label{eq:lyapunov_step_with_max}
\end{align*}
    with ${\gammabari} > \alpha_{N,i} > 0$. The last inequality holds $\forall i \in \M$ for
\begin{equation} \label{eq:tau_solution}
    \tau = \frac{\ln V_{\epsilon} - \ln \left( \max_{i \in \M} \Vmaxi \right)}{\ln \left ( \max_{i \in \M} 1 - \frac{\alpha_{N,i}}{\gammabari} \right)}.
\end{equation}

Consequently, condition $V_{i,t'} \leq V_\epsilon$ holds for $t' = t + \tau$, and it follows from \textit{Part I} that condition \eqref{eq:partition_update_condition} is satisfied, resulting in a feasible partition update. This outcome contradicts the initial assumption, implying that condition \eqref{eq:partition_update_condition} is fulfilled for some $t' \in \left[t, t + \tau \right]$ with the finite uniform bound $\tau$. 
\end{proof}

\begin{IEEEbiography}
[{\includegraphics[width=1in,height=1.25in,clip,keepaspectratio]{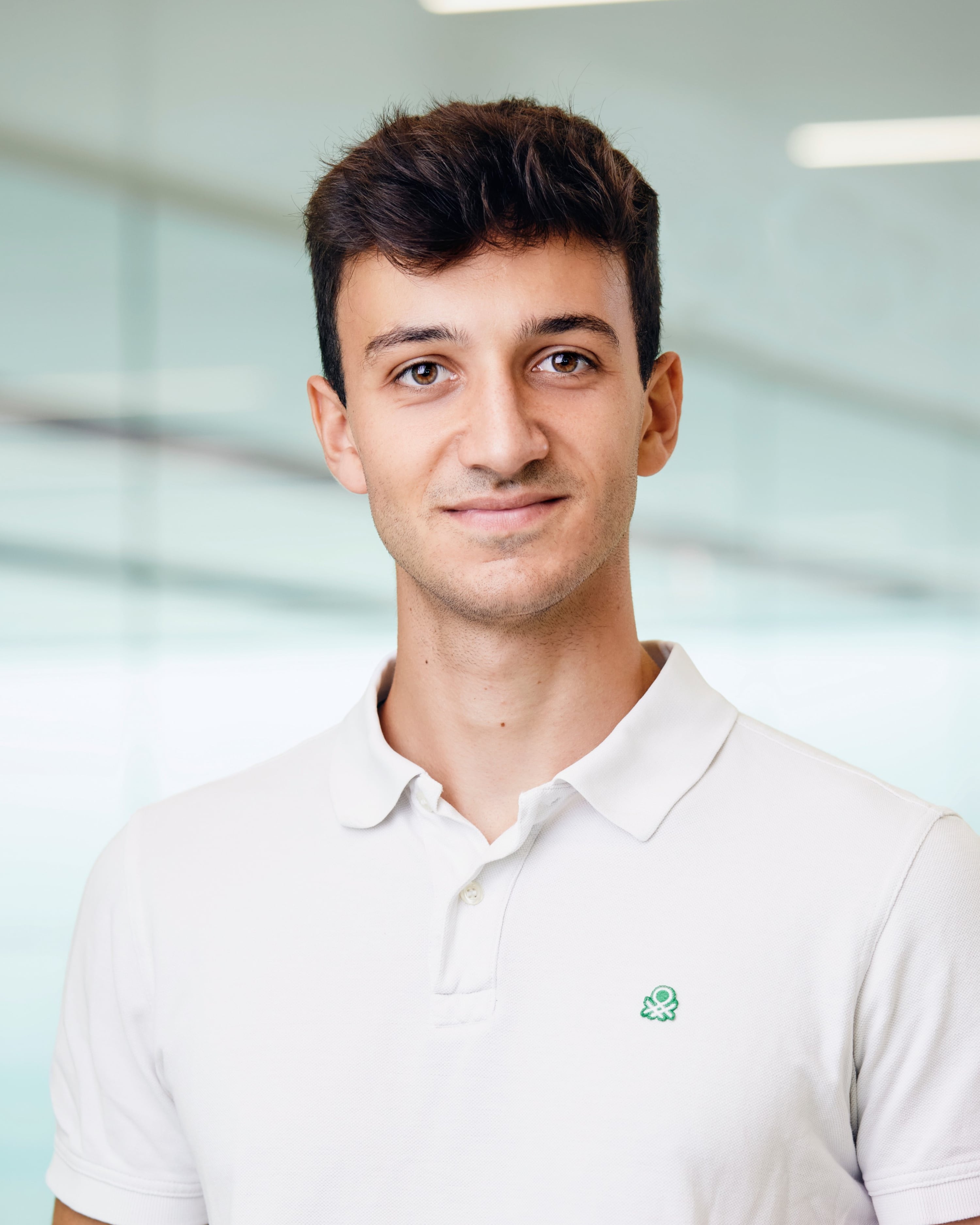}}]{Patrick Benito Eberhard}is currently pursuing the master's degree in Robotics, Systems, and Control at ETH Zurich, Switzerland. He is a member of the Institute for Dynamic Systems and Control and is currently a Visiting Researcher at Stanford University. His research interests include model predictive control and learning-based control for autonomous and multi-agent systems. He is a recipient of the Excellence Scholarship and Opportunity Programme from ETH Zurich.
\end{IEEEbiography}

\begin{IEEEbiography}[{\includegraphics[width=1in,height=1.25in,clip,keepaspectratio]{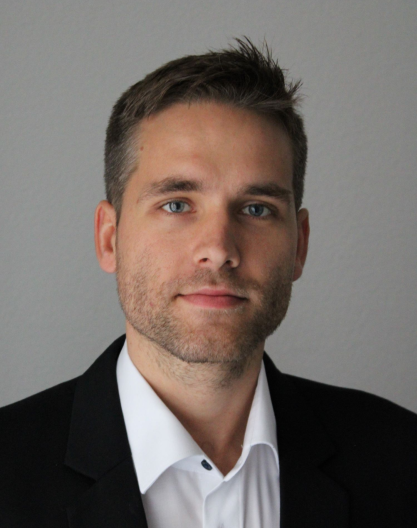}}]{Johannes K\"ohler}
received the Ph.D. degree from the University of Stuttgart, Germany, in 2021. He is currently a postdoctoral researcher at ETH Zürich, Switzerland. 
He is the recipient of the 2021 European Systems \& Control
PhD Thesis Award, the IEEE CSS George S. Axelby
Outstanding Paper Award 2022, and the Journal of
Process Control Paper Award 2023.
His research interests are in the area of model predictive control and control and estimation for nonlinear uncertain systems. 
\end{IEEEbiography}

\begin{IEEEbiography}
[{\includegraphics[width=1in,height=1.25in,clip,keepaspectratio]{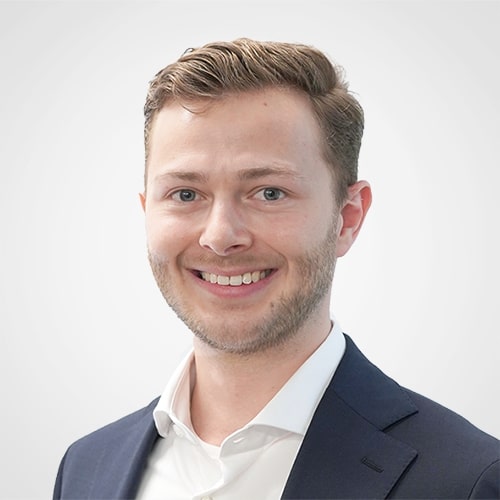}}]{Oliver H\"usser} received his master's degree in electrical engineering in 2024 from ETH Zürich, Switzerland. During his studies, he focused on optimization-based control. His interests included feedback optimization and model predictive control. He is currently working as a consultant at Eraneos, where he strives to lead projects with many unknown parameters and disturbances to success.
\end{IEEEbiography}

\begin{IEEEbiography}[{\includegraphics[width=1in,height=1.25in,clip,keepaspectratio]{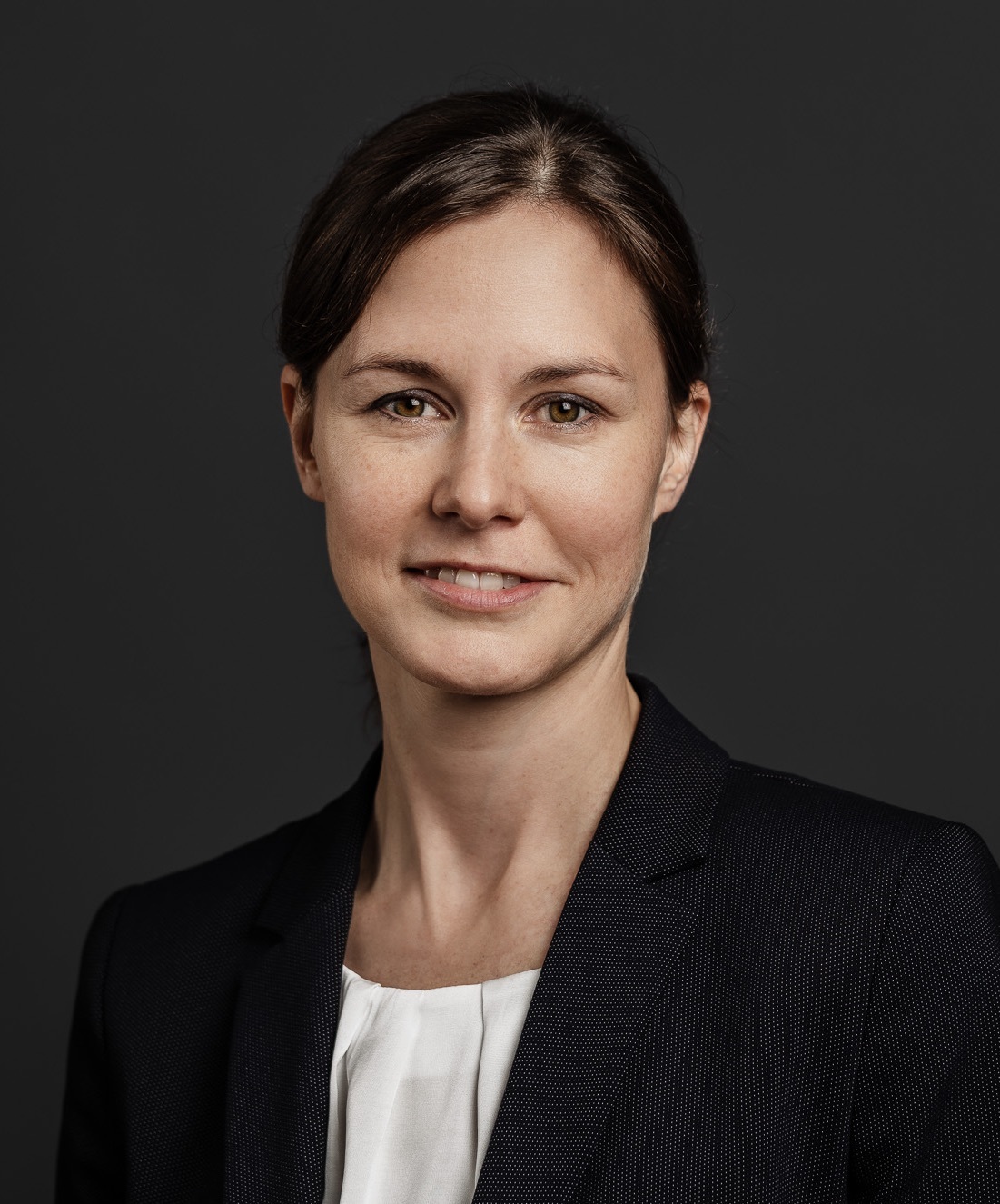}}]{Melanie N. Zeilinger}
 is an Associate Professor at ETH Zürich, Switzerland. 
She received the Diploma degree in engineering cybernetics from the University of Stuttgart, Germany, in 2006, and the Ph.D. degree with honors in electrical engineering from ETH Zürich, Switzerland, in 2011. 
From 2011 to 2012 she was a Postdoctoral Fellow with the Ecole Polytechnique Federale de Lausanne (EPFL), Switzerland.
She was a Marie Curie Fellow and Postdoctoral Researcher with the Max Planck Institute for Intelligent
Systems, Tübingen, Germany until 2015 and with the Department of Electrical Engineering and Computer Sciences at the University
of California at Berkeley, CA, USA, from 2012 to 2014. 
From 2018 to 2019 she was a professor at the University of Freiburg, Germany. 
Her current research interests include safe learning-based control, as well as distributed control and optimization, with applications to robotics and human-in-the loop control.
\end{IEEEbiography}
%\vspace{-35.5\baselineskip} % added to avoid strange spacing in between bios
\begin{IEEEbiography}[{\includegraphics[width=1in,height=1.25in,clip,keepaspectratio]{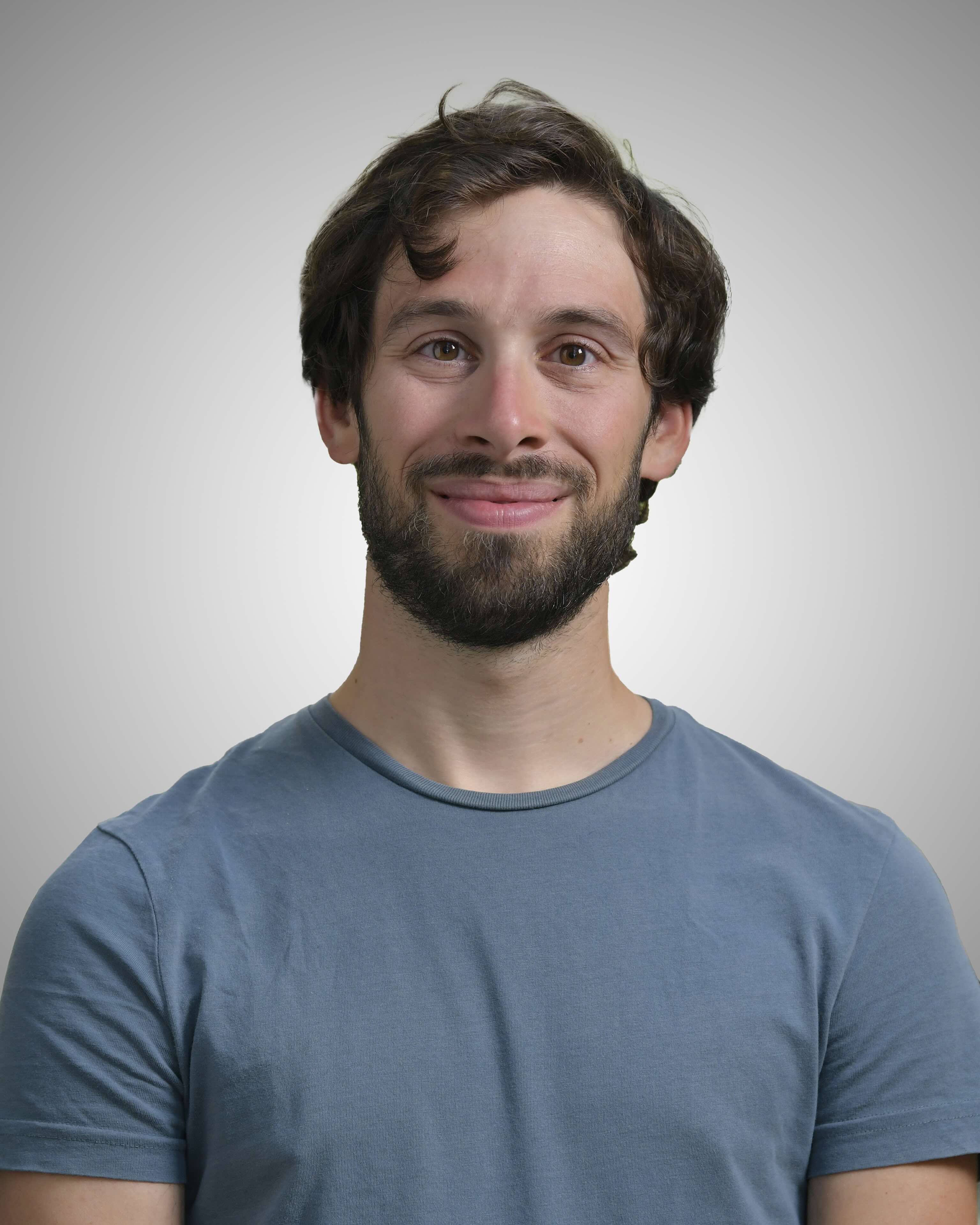}}]{Andrea Carron} received the bachelor’s, master’s, and Ph.D. degrees in control engineering from the University of Padova, Padova, Italy, in 2010, 2012, and, 2016, respectively. He is currently a Senior Lecturer with ETH Zürich. He was a Visiting Researcher with the University of California at Riverside, with Max Planck Institute in Tubingen and with the University of California at Santa Barbara, respectively. From 2016 to 2019, he was a Postdoctoral Fellow with Intelligent Control Systems Group at ETH Zürich. His research interests include safe-learning, learning-based control, multiagent systems, and robotics.
\end{IEEEbiography}
\end{document}